\newtheorem{theorem}{Theorem}[section]
\newtheorem{assumption}{Assumption}[section]
\newtheorem{definition}{Definition}[section]
\newtheorem{proposition}{Proposition}[section]
\newtheorem{property}{Property}[section]
\newcommand{\rv}{\textsc{\textbf{r}}}
\newcommand{\R}{\mathbb{R}}
\newcommand{\rr}{\boldsymbol{R}}
\newcommand{\N}{\mathbb{N}}
\newcommand{\E}{\mathbb{E}}
\newcommand{\X}{\boldsymbol{X}}
\newcommand{\M}{\mathcal{M}}
\newcommand{\e}{\mathbf{e}}
\newcommand{\F}{\boldsymbol{F}}
\newcommand{\x}{\mathbf{x}}
\newcommand{\cov}{\text{Cov}}
\newcommand{\z}{\mathbf{z}}
\newcommand{\f}{\mathbf{f}}
\newcommand{\s}{\mathbf{s}}
\newcommand{\bb}{\mathbf{b}}
\newcommand{\B}{\boldsymbol{B}}
\newcommand{\Z}{\boldsymbol{Z}}
\newcommand{\V}{\boldsymbol{V}}
\def\independenT#1#2{\mathrel{\rlap{$#1#2$}\mkern2mu{#1#2}}}
\newcommand\independent{\protect\mathpalette{\protect\independenT}{\perp}}
\begin{document}

\title{Principal Component Analysis:\\ A Generalized Gini Approach}

\author{Arthur Charpentier \\
UQAM \\  \and St\'{e}phane Mussard\thanks{$^{{}}$Universit\'{e}  de N\^{i}mes -- e-mail: stephane.mussard@unimes.fr,
Research fellow \textsc{Gr\'{e}di} University of Sherbrooke, and \textsc{Liser} Luxembourg.}\\
\textsc{Chrome}\\
Universit\'{e} de N\^{i}mes \and 
T\'{e}a Ouraga \\ \textsc{Chrome} \\ Universit\'{e} de N\^{i}mes} 

\date{}
\maketitle

\begin{abstract}
A principal component analysis based on the generalized Gini correlation index is proposed (Gini PCA). The Gini PCA generalizes the standard PCA based on the variance. It is shown, in the Gaussian case, that the standard PCA is equivalent to the Gini PCA. It is also proven that the dimensionality reduction based on the generalized Gini correlation matrix, that relies on city-block distances, is robust to outliers. Monte Carlo simulations and an application on cars data (with outliers) show the robustness of the Gini PCA and provide different interpretations of the results compared with the variance PCA.  
\end{abstract}

\textbf{keywords} : Generalized Gini ; PCA ; Robustness.


\break

\section{Introduction} 

This late decade, a line of research has been developed and focused on the Gini methodology, see \citet{Yitzhaki13} for a general review of different Gini approaches applied in Statistics and in Econometrics.\footnote{See \citet{giorgi} for an overview of the "Gini methodology".} Among the Gini tools, the Gini regression has received a large audience since the Gini regression initiated by \citet{ref9}. Gini regressions have been generalized by \citet{Yitzhaki13} in different areas and particularly in time series analysis. \citet{ref13} and \citet{ref2} investigated ARMA processes with an identification and an estimation procedure based on Gini autocovariance functions. This robust Gini approach has been shown to be relevant to heavy tailed distributions such as Pareto processes. Also, \citet{shelef16} proposed a unit root test based on Gini regressions to deal with outlying observations in the data. \medskip

In parallel to the above literature, a second line of research on multidimensional Gini indices arose. This literature paved the way on the valuation of inequality about multiple commodities or dimensions such as education, health, income, etc., that is, to find a real-valued function that quantifies the inequality between the households of a population over each dimension, see among others, \cite{List}, \cite{Gajdos}, \cite{Decancq}. More recently, \cite{Banerjee} shows that it is possible to construct multidimensional Gini indices by exploring the projection of the data in reduced subspaces based on the Euclidean norm. Accordingly, some notions of linear algebra have been increasingly included in the axiomatization of multidimensional Gini indices.\medskip

In this paper, in the same vein as in the second line of research mentioned above, we start from the recognition that linear algebra may be closely related to the maximum level of inequality that arises in a given dimension. In data analysis, the variance maximization is mainly used to further analyze projected data in reduced subspaces. The variance criterion implies many problems since it captures a very precise notion of dispersion, which does not always match some basic properties satisfied by variability measures such as the Gini index. Such a property may be, for example, an invariance condition postulating that a dispersion measure remains constant when the data are transformed by monotonic maps.\footnote{See \citet{furman} for the link between variability (risk) measures and the Gini correlation index.} Another property typically related to the Gini index is its robustness to outlying observations, see e.g. \cite{Olkin} 
in the case of linear regressions. Accordingly, it seems natural to analyze multidimensional dispersion with the Gini index, instead of the variance, in order to provide a Principal Components Analysis (PCA) in a Gini sense (Gini PCA). \medskip

In the field of PCA, \cite{Baccini} and \cite{Korhonen} are among the first authors dealing with a $\ell_1$-norm PCA framework. Their idea was to robustify the standard PCA by means of the Gini Mean Difference metric introduced by \cite{Gini12}, which is a city-block distance measure of variability. The authors employ the Gini Mean Difference as an estimator of the standard deviation of each variable before running the singular value decomposition leading to a robust PCA. In the same vein, \cite{Ding2006} make use of a rotational $\ell_1$ norm PCA to robustify the variance-covariance matrix in such a way that the PCA is rotational invariant. Recent PCAs derive latent variables thanks to regressions based on \emph{elastic net} (a $\ell_1$ regularization) that improves the quality of the regression curve estimation, see \cite{Zou}. 

\medskip
 
In this paper, it is shown that the variance may be seen as an inappropriate criterion for dimensionality reduction in the case of data contamination or outlying observations. A generalized Gini PCA is investigated by means of Gini correlations matrices. These matrices contain generalized Gini correlation coefficients (see \cite{Yitzhaki03}) based on the Gini covariance operator introduced by \cite{Schechtman87} and \cite{Schechtman03}. The generalized Gini correlation coefficients are: (i) bounded, (ii) invariant to monotonic transformations, (iii) and symmetric whenever the variables are exchangeable. It is shown that the standard PCA is equivalent to the Gini PCA when the variables are Gaussians. Also, it is shown that the generalized Gini PCA may be realized either in the space of the variables or in the space of the observations. In each case, some statistics are proposed to perform some interpretations of the variables and of the observations (absolute and relative contributions). To be precise, an $U$-statistics test is introduced to test for the significance of the correlations between the axes of the new subspace and the variables in order to assess their significance. Monte Carlo simulations are performed in order to show the superiority of the Gini PCA compared with the usual PCA when outlying observations contaminate the data. Finally, with the aid of the well-known cars data, which contain outliers, it is shown that the generalized Gini PCA leads to different results compared with the usual PCA. 

\medskip

The outline of the paper is as follows. Section \ref{s2} sets the notations and presents some $\ell_2$ norm approaches of PCA. Section \ref{s3} reviews the Gini-covariance operator.  Section \ref{s4} is devoted to the generalized Gini PCA.  Section \ref{s5} focuses on the interpretation of the Gini PCA.  Sections \ref{s6} and \ref{s7} present some Monte Carlo simulations and applications, respectively.

\section{Motivations for the use of Gini PCA}\label{s2}

In this Section, the notations are set. Then, some assumptions are imposed and some $\ell_2$-norm PCA techniques are reviewed in order to motivate the employ of the Gini PCA.  

\subsection{Notations and definitions}

Let $\N^{*}$ be the set of integers and $\R$ $[\R_{++}]$ the set of [positive] real numbers. Let $\M$ be the set of all $N\times K$ matrix $\X=[x_{ik}]$ that describes $N$ observations on $K$ dimensions such that $N \gg K$, with elements $x_{ik}\in \R$, and $\mathbb{I}_n$ the $n\times n$ identity matrix. The $N\times 1$ vectors representing each variable are expressed as $\x_{\cdot k}$, for all $k \in \{1,\ldots,K\}$ and we assume that $\x_{\cdot k} \neq c\mathbf{1}_{N}$, with $c$ a real constant and $\mathbf{1}_{N}$ a $N$-dimensional column vector of ones. The $K\times 1$ vectors representing each observation $i$ (the transposed $i$th line of $\X$) are expressed as $\x_{i\cdot}$, for all $i \in \{1,\ldots,N\}$. It is assumed that $\x_{\cdot k}$ is the realization of the random variable $X_k$, with cumulative distribution function $F_k$. The arithmetic mean of each column (line) of the matrix $\X$ is given by $\bar{\x}_{\cdot k}$ ($\bar{\x}_{i\cdot}$). The cardinal of set $A$ is denoted $\# \{A\}$. The $\ell_1$ norm, for any given real vector $\x$, is $\| \mathbf x \|_1 = \sum_{k=1}^K|x_{\cdot k}|$, whereas the $\ell_2$ norm is $\| \x \|_2 =(\sum_{k=1}^Kx_{\cdot k}^2)^{1/2}$.

\begin{assumption}\
The random variables $X_k$ are such that $\mathbb E[|X_k|] < \infty$ for all $k \in \{1,\ldots,K\}$, but no assumption is made on the second moments (that may not exist).
\end{assumption}

This assumption imposes less structure compared with the classical PCA in which the existence of the second moments are necessary, as can be seen in the next subsection.

\subsection{Variants of PCA based on the $\ell_2$ norm}

The classical formulation of the PCA, to obtain the first component, can be obtained by solving
\begin{equation}\label{eq:PCA0}
\omega_1^*\in\text{argmax}\left\lbrace\text{Var}[\X\omega]\right\rbrace
\text{ subject to }\|\omega\|_2^2=\omega^{\top}\omega=1,\end{equation}
or equivalently
\begin{equation}\label{eq:PCA}
\omega_1^*\in\text{argmax}\left\lbrace\omega^{\top}\boldsymbol{\Sigma}\omega\right\rbrace
\text{ subject to }\|\omega\|_2^2=\omega^{\top}\omega=1,
\end{equation}
where $\omega\in\mathbb{R}^K$, and $\boldsymbol{\Sigma}$ is the (symmetric positive semi-definite) $K\times K$ sample covariance matrix.
\cite{Mardia} suggest to write$$
\omega_1^*\in\text{argmax}\left\lbrace\sum_{j=1}^K\text{Var}[\x_{\cdot, j}]\cdot\text{Cor}[\x_{\cdot,j},\X\omega]\right\rbrace
\text{ subject to }\|\omega\|_2^2=\omega^{\top}\omega=1.$$
With scaled variables\footnote{In most cases, PCA is performed on scaled (and centered) variables, otherwise variables with large scales might alter interpretations. Thus, it will make sense, later on, to assume that components of $\X$ have identical distributions. At least the first two moments will be equal.\label{foot3}} (i.e. $\text{Var}[\x_{\cdot, j}]=1$, $\forall j$)
\begin{equation}\label{eq:PCA:norm}
\omega_1^*\in\text{argmax}\left\lbrace\sum_{j=1}^K \text{Cor}[\x_{\cdot,j},\X\omega]\right\rbrace
\text{ subject to }\|\omega\|_2^2=\omega^{\top}\omega=1.
\end{equation}
Then a Principal Component Pursuit can start: we consider the `residuals', $\X_{(1)}=\X-\X\omega_1^*\omega_1^{*\top}$, its covariance matrix $\boldsymbol{\Sigma}_{(1)}$, and we solve
$$
\omega_2^*\in\text{argmax}\left\lbrace\omega^{\top}\boldsymbol{\Sigma}_{(1)}\omega\right\rbrace
\text{ subject to }\|\omega\|_2^2=\omega^{\top}\omega=1.$$
The part $\X\omega_1^*\omega_1^{*\top}$ is actually a constraint that we add to ensure the orthogonality of the two first components. This problem is equivalent to finding the maxima of $\text{Var}[\X\omega]$ subject to $\|\omega\|_2^2=1$ and $\omega\perp\omega_1^*$.
This idea is also called Hotelling (or Wielandt) deflation technique. On the $k$-th iteration, we extract the leading eigenvector
$$
\omega_k^*\in\text{argmax}\left\lbrace\omega^{\top}\boldsymbol{\Sigma}_{(k-1)}\omega\right\rbrace
\text{ subject to }\|\omega\|_2^2=\omega^{\top}\omega=1,$$
where $\boldsymbol{\Sigma}_{(k-1)}=\boldsymbol{\Sigma}_{(k-2)}-\omega_{k-1}^{*}\omega_{k-1}^{*\top}\boldsymbol{\Sigma}_{(k-1)}\omega_{k-1}^{*}\omega_{k-1}^{*\top}$ (see e.g. \cite{Saad}). Note that, following \cite{Hotelling} and \cite{Eckart}, that it is also possible to write this problem as
$$
\min\left\lbrace\|\X-\tilde{\X}\|_\star\right\rbrace
\text{ subject to }\text{rank}[\tilde{\X}]\leq k
$$
where $\|\cdot\|_\star$ denotes the nuclear norm of a matrix (\emph{i.e.} the sum of its singular values)\footnote{but other norms have also been considered in statistical literature, such as the Froebenius norm in the Eckart-Young theorem, or the maximum of singular values -- also called $2$-(induced)-norm.}. 

One extension, introduced in \cite{dAspremont}, was to add a constraint based on the cardinality of $\omega$ (also called $\ell_0$ norm) corresponding to the number of non-zero coefficients of $\omega$. The penalized objective function is then
$$
\max\left\lbrace\omega^{\top}\boldsymbol{\Sigma}\omega-\lambda\text{card}[\omega]\right\rbrace
\text{ subject to }\|\omega\|_2^2=\omega^{\top}\omega=1,
$$
for some $\lambda>0$. This is called {\em sparse PCA}, and can be related to sparse regression, introduced in \cite{Tibshirani}. But as pointed out in \cite{Mackey}, interpretation is not easy and the components obtained are not orthogonal. \cite{Gorban} considered an extension to nonlinear Principal Manifolds to take into account nonlinearities.

Another direction for extensions was to consider Robust Principal Component Analysis. \cite{Candes} suggested an approach based on the fact that principal component pursuit can be obtained by solving 
$$
\min\left\lbrace\|\X-\tilde{\X}\|_\star+\lambda\|\tilde{\X}\|_1\right\rbrace.$$
But other methods were also considered to obtain Robust PCA. A natural `scale-free' version is obtained by considering a rank matrix instead of $\X$. This is also called `ordinal' PCA in the literature, see \cite{Korhonen}. The first `ordinal' component is
\begin{equation}
\omega_1^*\in\text{argmax}\left\lbrace\sum_{j=1}^K\mathcal{R}[\x_{\cdot,j},\X\omega]\right\rbrace
\text{ subject to }\|\omega\|_2^2=\omega^{\top}\omega=1
\end{equation}
where $\mathcal{R}$ denotes some rank based correlation, \emph{e.g.} Spearman's rank correlation, as an extention of Equation (\ref{eq:PCA:norm}). So, quite naturally, one possible extension of Equation (\ref{eq:PCA}) would be
$$
\omega_1^*\in\text{argmax}\left\lbrace\omega^{\top}\mathcal{R}[\X]\omega\right\rbrace
\text{ subject to }\|\omega\|_2^2=\omega^{\top}\omega=1$$
where $\mathcal{R}[\X]$ denotes Spearman's rank correlation. In this section, instead of using Pearson's correlation (as in Equation (\ref{eq:PCA}) when the variables are scaled) or Spearman's (as in this ordinal PCA), we will consider the multidimensional Gini correlation based on the $h$-covariance operator.

\section{Geometry of Gini PCA: Gini-Covariance Operators}\label{s3}

The first PCA was introduced by \cite{Pearson}, projecting $\boldsymbol{X}$ onto the eigenvectors of its covariance matrix, and observing that the variances of those projections are the corresponding eigenvalues. One of the key property is that $\X^\top\X$ is a positive matrix. Most statistical properties of PCAs (see \cite{FluryRiedwyl} or \cite{Anderson}) are obtained under Gaussian assumptions. Furthermore, geometric properties can be obtained using the fact that the covariance defines an inner product on the subspace of random variables with finite second moment (up to a translation, \emph{i.e.} we identify any two that differ by a constant). 

We will discuss in this section the properties of the Gini Covariance operator with the special case of Gaussian random variables, and the property of the Gini correlation matrix that will be used in the next Section for the Gini PCA. 

\subsection{The Gini-covariance operator}\label{section}

In this section, $\boldsymbol{X}=(X_1,\cdots,X_K)$ denotes a random vector. 
The covariance matrix between $\boldsymbol{X}$ and $\boldsymbol{Y}$, two random vectors, is defined as the inner product between centered versions of the vectors, 
\begin{equation}\label{eq:innerprod}
\langle \boldsymbol{X},\boldsymbol{Y}\rangle=\text{Cov}(\boldsymbol{X},\boldsymbol{Y})=\mathbb{E}[(\boldsymbol{X}-\mathbb{E}[\boldsymbol{X}])(\boldsymbol{Y}-\mathbb{E}[\boldsymbol{Y}])^{\text{\sffamily T}}].
\end{equation}
Hence, it is the matrix where elements are regular covariances between components of the vectors, $\text{Cov}(\boldsymbol{X},\boldsymbol{Y})=[\text{Cov}(X_i,Y_j)]$. It is the upper-right block of the covariance matrix of $(\boldsymbol{X},\boldsymbol{Y})$. Note that $\text{Cov}(\boldsymbol{X},\boldsymbol{X})$ is the standard variance-covariance matrix of vector $\boldsymbol{X}$.

\begin{definition}
Let $\boldsymbol{X}=(X_1,\cdots,X_{K})$ be collections of $K$ identically distributed random variables. Let $h:\mathbb{R}\rightarrow\mathbb{R}$ denote a non-decreasing function. Let $h(\boldsymbol{X})$ denote the random vector $(h(X_1),\cdots,h(X_K))$, and assume that each component has a finite variance. Then, operator $\Gamma C_{h}(\boldsymbol{X})=\emph{Cov}(\boldsymbol{X},h(\boldsymbol{X}))$ is called $h$-Gini covariance matrix. 
\end{definition}

Since $h$ is a non-decreasing mapping, then $\boldsymbol{X}$ and $h(\boldsymbol{X})$ are componentwise comonotonic random vectors. Assuming that components of $\boldsymbol{X}$ are identically distributed is a reasonable assumption in the context of scaled (and centered) PCA, as discussed in footnote \ref{foot3}. Nevertheless, a stronger technical assumption will be necessary: pairwise-exchangeability.

\begin{definition}
$\boldsymbol{X}$ is said to be pairwise-exchangeable if for all pair $(i,j)\in\{1,\cdots,K\}^2$, $(X_i,X_j)$ is exchangeable, in the sense that $(X_i,X_j)\overset{\mathcal{L}}{=}(X_j,X_i)$.
\end{definition}

Pairwise-exchangeability is a stronger concept than having only one vector with identically distributed components, and a weaker concept than (full) exchangeability. In the Gaussian case where $h(X_k) = \Phi(X_k)$ with $\Phi(X_k)$ being the normal cdf of $X_k$ for all $k=1,\ldots,K$, pairwise-exchangeability is equivalent to components identically distributed.

\begin{proposition}
If $\boldsymbol{X}$ is a Gaussian vector with identically distributed components, then $\boldsymbol{X}$ is pairwise-exchangeable.
\end{proposition}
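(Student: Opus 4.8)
The plan is to reduce the distributional equality $(X_i,X_j)\overset{\mathcal{L}}{=}(X_j,X_i)$ to an elementary comparison of first and second moments, exploiting that a Gaussian law is entirely determined by its mean vector and covariance matrix.

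First I would record that every subvector of a Gaussian vector is again Gaussian; in particular, for any fixed pair $(i,j)$ both $(X_i,X_j)$ and $(X_j,X_i)$ are bivariate Gaussian random vectors. Consequently, to establish that these two vectors share the same law it suffices to show that they have the same mean vector and the same covariance matrix.

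Next I would use the hypothesis that the components of $\boldsymbol{X}$ are identically distributed. Denote by $\mu$ the common mean and by $\sigma^2$ the common variance of each $X_k$ (both finite in the Gaussian setting). Then the mean vector of $(X_i,X_j)$ equals $(\mu,\mu)^{\top}$, which coincides with the mean vector of $(X_j,X_i)$, and the diagonal entries of both covariance matrices equal $\sigma^2$. It remains only to compare the off-diagonal entries, namely $\text{Cov}(X_i,X_j)$ and $\text{Cov}(X_j,X_i)$; by symmetry of the covariance operator these are equal, so the two covariance matrices coincide.

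Combining these observations, $(X_i,X_j)$ and $(X_j,X_i)$ are bivariate Gaussian vectors with identical mean and identical covariance, hence identically distributed; since the pair $(i,j)$ was arbitrary, $\boldsymbol{X}$ is pairwise-exchangeable. I expect no genuine obstacle here: the only point requiring care is that, for Gaussian vectors, equality of the first two moments is not merely necessary but also \emph{sufficient} for equality in law. This is precisely what makes the argument short, and it is exactly the feature that typically fails for non-Gaussian vectors with identically distributed and symmetrically correlated components.
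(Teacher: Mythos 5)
Your argument is correct and is essentially the paper's own proof: both reduce pairwise exchangeability to the fact that a bivariate Gaussian law is determined by its mean vector and covariance matrix, which are symmetric under swapping the two coordinates since the components are identically distributed and $\text{Cov}(X_i,X_j)=\text{Cov}(X_j,X_i)$. Your version is marginally more careful in treating general $\mu$ and $\sigma^2$ rather than normalizing to $\mathcal{N}(0,1)$ as the paper does, but the substance is the same.
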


\begin{proof}
For simplicity, assume that components of $\boldsymbol{X}$ are $\mathcal{N}(0,1)$ random variables, then $\boldsymbol{X}\sim\mathcal{N}(\boldsymbol{0},\boldsymbol{\rho})$ where $\boldsymbol{\rho}$ is a correlation matrix. In that case
$$
\left[
\begin{matrix}
X_i \\
X_j
\end{matrix}
\right]\sim\mathcal{N}\left(
\left[
\begin{matrix}
0 \\
0
\end{matrix}
\right],
\left[
\begin{matrix}
1 & \rho_{ij} \\
\rho_{ji} & 1 \\
\end{matrix}
\right]
\right),$$
with Pearson correlation $\rho_{ij}=\rho_{ji}$, thus $(X_i,X_j)$ is exchangeable.
\end{proof}

Let us now introduce the Gini-covariance. \cite{Gini12} introduced the Gini mean difference operator $\Delta$, defined as: 
\begin{equation}\label{eq:gmd}
\Delta(X)=\mathbb{E}(|X_1-X_2|)\text{ where }X_1,X_2\sim X,\text{ and }X_1\independent X_2, 
\end{equation}
for some random variable $X$ (or more specifically for some distribution $F$ with $X\sim F$, because this operator is law invariant). One can rewrite:
$$
\Delta(X)=4\text{Cov}[X,F(X)]=\frac{1}{3}\frac{\text{Cov}[X,F(X)]}{\text{Cov}[F(X),F(X)]}
$$
where the term on the right is interpreted as the slope of the regression curve of the observed variable $X$ and its `ranks' (up to a scaling coefficient). Thus, the Gini-covariance is obtained when the function $h$ is equal to the cumulative distribution function of the second term, see \cite{Schechtman87}.  

\begin{definition}
Let $\boldsymbol{X}=(X_1,\cdots,X_{K})$ be a collection of $K$ identically distributed random variables, with cumulative distribution function $F$. Then, the Gini covariance is $\Gamma C_F(\boldsymbol{X})=\emph{\text{Cov}}(\boldsymbol{X},F(\boldsymbol{X}))$.
\end{definition}

On this basis, it is possible to show that the Gini covariance matrix is a positive semi-definite matrix. 

\begin{theorem}
Let $Z \sim \mathcal{N}(0,1)$. If $\boldsymbol{X}$ represents identically distributed Gaussian random variables, with distribution $\mathcal{N}(\mu,\sigma^2)$, then the two following assertions hold:

\emph{(i)} $\Gamma C_F(\boldsymbol{X})=\sigma^{-1} \emph{Cov}(Z,\Phi(Z)) \emph{Var}(\boldsymbol{X}).$

\emph{(ii)} $\Gamma C_F(\boldsymbol{X})$ is a positive-semi definite matrix. 
\end{theorem}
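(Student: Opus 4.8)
The plan is to reduce everything to the standardized normal case and then invoke Stein's lemma (the Gaussian integration-by-parts identity); part (ii) is then essentially free. First I would standardize: writing $X_k=\mu+\sigma Z_k$ with $Z_k\sim\mathcal{N}(0,1)$, the common marginal cdf satisfies $F(X_k)=\Phi(Z_k)$, so pulling the constants $\mu$ and $\sigma$ out of the covariance gives, for the $(i,j)$ entry,
\begin{equation*}
[\Gamma C_F(\X)]_{ij}=\text{Cov}(X_i,F(X_j))=\sigma\,\text{Cov}(Z_i,\Phi(Z_j)).
\end{equation*}
Since $\X$ is a Gaussian vector, each pair $(Z_i,Z_j)$ is jointly standard normal with correlation $\text{Cov}(Z_i,Z_j)$.

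The key step is Stein's lemma: for a jointly Gaussian pair $(Z_i,Z_j)$ of standard normals and a differentiable $g$,
\begin{equation*}
\text{Cov}(Z_i,g(Z_j))=\text{Cov}(Z_i,Z_j)\,\E[g'(Z_j)].
\end{equation*}
Taking $g=\Phi$, so that $g'=\phi$, yields $\text{Cov}(Z_i,\Phi(Z_j))=\text{Cov}(Z_i,Z_j)\,\E[\phi(Z_j)]$. A separate application with $i=j$ shows $\text{Cov}(Z,\Phi(Z))=\E[\phi(Z)]$ (equivalently $\E[\phi(Z)]=\int\phi^2=1/(2\sqrt{\pi})$ by a one-line Gaussian integral), so the scalar factor $\E[\phi(Z_j)]$ is exactly $\text{Cov}(Z,\Phi(Z))$. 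Finally I would use $\text{Var}(\X)_{ij}=\text{Cov}(X_i,X_j)=\sigma^2\,\text{Cov}(Z_i,Z_j)$, i.e. $\text{Cov}(Z_i,Z_j)=\sigma^{-2}\text{Var}(\X)_{ij}$, and substitute back to obtain
\begin{equation*}
[\Gamma C_F(\X)]_{ij}=\sigma\cdot\sigma^{-2}\text{Var}(\X)_{ij}\cdot\text{Cov}(Z,\Phi(Z))=\sigma^{-1}\text{Cov}(Z,\Phi(Z))\,\text{Var}(\X)_{ij},
\end{equation*}
which is assertion (i) in matrix form.

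For (ii), the matrix $\text{Var}(\X)$ is a genuine variance-covariance matrix, hence positive semi-definite, while the scalar multiplier $\sigma^{-1}\text{Cov}(Z,\Phi(Z))=1/(2\sigma\sqrt{\pi})$ is strictly positive; a positive scalar times a PSD matrix is PSD, giving (ii) at once. I expect the only genuine content to lie in the Stein's-lemma step, with the standardization and the evaluation of $\E[\phi(Z)]$ being routine bookkeeping. The one hypothesis worth flagging is that the argument needs $\X$ to be jointly Gaussian rather than merely marginally so, since Stein's lemma is applied to the bivariate pairs $(Z_i,Z_j)$; this is the natural reading in the Gaussian-vector setting already used in Proposition 3.1.
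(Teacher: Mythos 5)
Your proof is correct, and its skeleton matches the paper's: reduce each entry $\mathrm{Cov}(X_i,F(X_j))$ to $\rho_{ij}\,\sigma\,\mathrm{Cov}(Z,\Phi(Z))$, reassemble into $\sigma^{-1}\mathrm{Cov}(Z,\Phi(Z))\,\mathrm{Var}(\boldsymbol{X})$, and deduce (ii) from the fact that a nonnegative scalar times a covariance matrix is positive semi-definite. The one substantive difference is that the paper simply cites the key identity $\mathrm{Cov}(X_k,h(X_\ell))=\rho\sigma\,\mathrm{Cov}(Z,\Phi(Z))$ from Yitzhaki and Schechtman (Chapter 3) and then writes out the resulting $2\times 2$ block, whereas you actually derive it via Stein's lemma, $\mathrm{Cov}(Z_i,g(Z_j))=\mathrm{Cov}(Z_i,Z_j)\,\mathbb{E}[g'(Z_j)]$ with $g=\Phi$, together with the evaluation $\mathbb{E}[\phi(Z)]=\mathrm{Cov}(Z,\Phi(Z))=1/(2\sqrt{\pi})$. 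This makes your argument self-contained where the paper's is not, and it also makes explicit the hypothesis that the paper leaves implicit: the identity requires $\boldsymbol{X}$ to be \emph{jointly} Gaussian (so that each pair $(Z_i,Z_j)$ is bivariate normal), not merely Gaussian in each marginal. Your flagging of that point is exactly right and is consistent with how the paper uses the result elsewhere (e.g.\ in its Proposition on pairwise-exchangeability of Gaussian vectors).
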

\begin{proof}
(i) In the Gaussian case, if $h$ is the cumulative distribution function of the $X_k$'s, then $\text{Cov}(X_k,h(X_\ell))=r\sigma\cdot\text{Cov}(Z,\Phi(Z))$, where $\Phi$ is the normal cdf, see \cite{Yitzhaki13}, Chapter 3.  
Observe that $\text{Cov}(X_k,h(X_k))=\sigma\cdot \text{Cov}(Z,\Phi(Z))$, if $h$ is the cdf of $X_k$. Thus, $\lambda := \cov(Z,\Phi(Z))$ yields: 
$$\Gamma C_F((X_k,X_\ell))=\lambda \left[
\begin{matrix}
\sigma & \rho \sigma \\
\rho\sigma  & \sigma \\
\end{matrix}
\right]=
\lambda \sigma \left[
\begin{matrix}
1 & \rho \\
\rho & 1\\
\end{matrix}
\right]=\frac{\lambda}{\sigma}
\text{Var}((X_k,X_\ell)).$$

\noindent (ii) We have $\text{Cov}(Z,\Phi(Z)) \geq 0$, then it follows that $C_F((X_k,X_\ell)) \geq 0$: 
$$
\x^{\top}C_F(\boldsymbol{X}) \x = \x^{\top}\frac{\cov(Z,\Phi(Z))}{\sigma}\text{Var}(\boldsymbol{X}) \x \geq 0,
$$
which ends the proof.
\end{proof}

Note that $\Gamma C_F(\boldsymbol{X})=\text{Cov}(\boldsymbol{X},-\overline{F}(\boldsymbol{X}))=\Gamma C_{-\overline{F}}(\boldsymbol{X})$, where $\overline{F}$ denotes the survival distribution function.

\begin{definition}
Let $\boldsymbol{X}=(X_1,\cdots,X_{K})$ be a collection of $K$ identically distributed random variables, with survival distribution function $\overline{F}$. Then, the generalized Gini covariance is $G\Gamma C_{\nu}(\boldsymbol{X})=\Gamma C_{-\overline{F}^{\nu-1}}(\boldsymbol{X})=\text{\em Cov}(\boldsymbol{X},-\overline{F}^{\nu-1}(\boldsymbol{X}))$, for $\nu>1$.
\end{definition}

This operator is related to the one introduced in \cite{Schechtman03}, called generalized Gini mean difference $GMD_\nu$ operator. More precisely, an estimator of the generalized Gini mean difference is given by: 
\[
GMD_\nu(\x_{\cdot\ell},\x_{\cdot k}) := -\frac{2}{N-1}\nu \cov(\x_{\cdot\ell},\rv_{\x_{\cdot k}}^{\nu-1}) ,  \ \nu > 1,
\]
where $\rv_{\x_{\cdot k}} = (R(x_{1k}),\ldots,R(x_{nk}))$ is the decumulative rank vector of $\x_{\cdot k}$, that is, the vector that assigns the smallest value (1) to the greatest observation $x_{ik}$, and so on. The rank of observation $i$ with respect to variable $k$ is: 
\[
R(x_{ik}) :=
\left\{ \begin{array}{ll}
N+1- \#\{x \leq x_{ik} \} & \text{if no ties} \\
N+1-\frac{1}{p}\sum_{i=1}^p \#\{ x \leq x_{ik}  \} & \text{if $p$ ties $x_{ik}$.}
\end{array}
\right.
\]
Hence $GMD_\nu(\x_{\cdot\ell},\x_{\cdot k})$ is the empirical version of
$$
2\nu \Gamma C_\nu(X_{\ell},X_{ k}) := -2\nu\cov\big(X_\ell,\overline{F}_k(X_k)^{\nu-1}\big).
$$
The index $GMD_\nu$ is a generalized version of the $GMD_2$ proposed earlier by \cite{Schechtman87}, and can also be written as:
$$
GMD_2(X_{k},X_{k}) = 4\cov\big(X_k,F_k(X_k)\big) = \Delta(X_k).
$$
When $k=\ell$, $GMD_\nu$ represents the variability of the variable $\x_{\cdot k}$ itself. Focus is put on the lower tail of the distribution $\x_{\cdot k}$ whenever $\nu \to \infty$, the approach is said to be max-min in the sense that $GMD_\nu$ inflates the minimum value of the distribution. On the contrary, whenever $\nu \to 0$, the approach is said to be max-max, in this case focus is put on the upper tail of the distribution $\x_{\cdot k}$. As mentioned in \cite{Yitzhaki13}, the case $\nu < 1$ does not entail simple interpretations, thereby the parameter $\nu$ is used to be set as $\nu > 1$ in empirical applications.\footnote{In risk analysis $\nu \in (0,1)$ denotes risk lover decision makers (max-max approach), whereas $\nu > 1$ stands for risk averse decision makers, and $\nu \to \infty$ extreme risk aversion (max-min approach).} 

Note that even if $X_k$ and $X_\ell$ have the same distribution, we might have $GMD_\nu(X_k,X_\ell)\neq GMD_\nu(X_\ell,X_k)$, as shown on the example of Figure \ref{Fig:dessin-echangeable}. In that case $\mathbb{E}[X_kh(X_\ell)]\neq \mathbb{E}[X_\ell h(X_k)]$ if $h(2)\neq 2h(1)$ (this property is nevertheless valid if $h$ is linear). We would have $GMD_\nu(X_k,X_\ell)= GMD_\nu(X_\ell,X_k)$ when $X_k$ and $X_\ell$ are exchangeable. But since generally $GMD_\nu$ is not symmetric, we have for $\x_{\cdot k}$ being not a monotonic transformation of $\x_{\cdot\ell}$ and $\nu > 1$, $GMD_\nu(\x_{\cdot k},\x_{\cdot\ell})\neq GMD_\nu(\x_{\cdot\ell},\x_{\cdot k})$.

\begin{figure}
\centering\includegraphics[width=.4\textwidth]{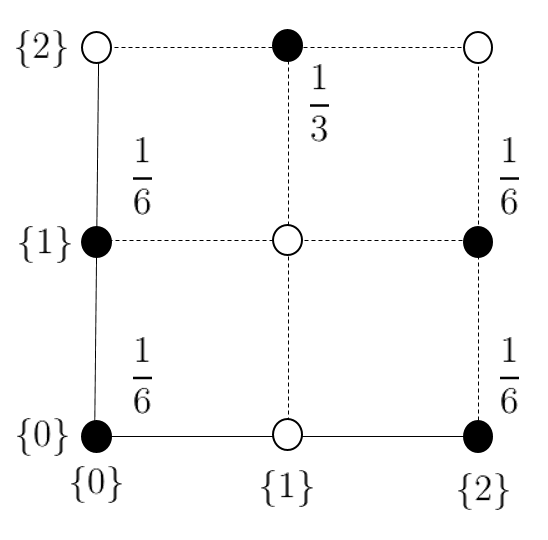}
\caption{Joint distribution of a random pair \ensuremath{(X_k,X_{\ell})} such that \ensuremath{\mathbb{E}[X_k h(X_{\ell})]\neq \mathbb{E}[X_{\ell} h(X_k)]}, with non-exchangeable components \ensuremath{X_k\overset{\mathcal{L}}{=}X_{\ell}}.}\label{Fig:dessin-echangeable}
\end{figure}

\subsection{Generalized Gini correlation}

In this section, $\X$ is a matrix in $\M$. The Gini correlation coefficient ($G$-correlation frown now on), is a normalized $GMD_\nu$ index such that for all $\nu > 1$, see \cite{Schechtman03},
\[
GC_\nu(\x_{\cdot\ell},\x_{\cdot k}) := \frac{GMD_\nu(\x_{\cdot\ell},\x_{\cdot k})}{GMD_\nu(\x_{\cdot\ell},\x_{\cdot\ell})} \ ; \ GC_\nu(\x_{\cdot k},\x_{\cdot\ell}) := \frac{GMD_\nu(\x_{\cdot k},\x_{\cdot\ell})}{GMD_\nu(\x_{\cdot k},\x_{\cdot k})},
\]
with $GC_\nu(\x_{\cdot k},\x_{\cdot k}) = 1$ and $GMD_\nu(\x_{\cdot k},\x_{\cdot k}) \neq 0$, for all $k,\ell=1,\ldots,K$. Following \cite{Schechtman03}, the $G$-correlation is well-suited for the measurement of correlations between non-normal distributions or in the presence of outlying observations in the sample. 

\begin{property}\label{prop1b} -- \textbf{\emph{Schechtman and Yitzhaki (2013):}}

\noindent \emph{(i)} $GC_\nu(\x_{\cdot\ell},\x_{\cdot k}) \leq 1$.

\noindent\emph{(ii)} If the variables $\x_{\cdot\ell}$ and $\x_{\cdot k}$ are independent, for all $k\neq \ell$, then $GC_\nu(\x_{\cdot\ell},\x_{\cdot k}) = GC_\nu(\x_{\cdot k},\x_{\cdot\ell}) =0$.

\noindent\emph{(iii)} For any given monotonic increasing transformation $\varphi$, $GC_\nu(\x_{\cdot\ell},\varphi(\x_{\cdot k})) = GC_\nu(\x_{\cdot\ell},\x_{\cdot k})$.

\noindent\emph{(iv)} If $(\x_{\cdot\ell},\x_{\cdot k})$ have a bivariate normal distribution with Pearson correlation $\rho$, then $GC_\nu(\x_{\cdot\ell},\x_{\cdot k}) = GC_\nu(\x_{\cdot k}, \x_{\cdot\ell}) = \rho$.

\noindent\emph{(v)} If $\x_{\cdot k}$ and $\x_{\cdot\ell}$ are exchangeable up to a linear transformation, then $GC_\nu(\x_{\cdot\ell},\x_{\cdot k}) = GC_\nu(\x_{\cdot k},\x_{\cdot\ell})$.
\end{property}

Whenever $\nu \to 1$, the variability of the variables is attenuated so that $GMD_\nu$ tends to zero (even if the variables exhibit a strong variance). The choice of $\nu$ is interesting to perform generalized Gini PCA with various values of $\nu$ in order to robustify the results of the PCA, since the standard PCA (based on the variance) is potentially of bad quality if outlying observations drastically affect the sample. 

A $G$-correlation matrix is proposed to analyze the data into a new vector space. Following Property \ref{prop1b} (iv), it is possible to rescale the variables $\x_{\cdot \ell}$ thanks to a linear transformation, then the matrix of standardized observation is,
\begin{equation}\label{standard}
\Z\equiv [z_{i\ell}] := \left[ \frac{x_{i\ell} - \bar{\x}_{\cdot \ell}}{GMD_\nu(\x_{\cdot\ell},\x_{\cdot\ell})}\right].
\end{equation}
The variable $z_{i\ell}$ is a real number without dimension. The variables $\x_{\cdot k}$ are rescaled such that their Gini variability is equal to unity. Now, we define the $N\times K$ matrix of decumulative centered rank vectors of $\Z$, which are the same compared with those of $\X$: 
\[
\rr_\z^c \equiv [R^c(z_{i\ell})] := [R(z_{i\ell})^{\nu-1} - \bar{\rv}^{\nu-1}_{\z_{\cdot\ell}}] = [R(x_{i\ell})^{\nu-1} - \bar{\rv}_{\x_{\cdot\ell}}^{\nu-1}].  
\]
Note that the last equality holds since the standardization (\ref{standard}) is a strictly increasing affine transformation.\footnote{By definition $GMD_\nu(\x_{\cdot \ell},\x_{\cdot \ell}) \geq 0$ for all $\ell=1,\ldots,K$. As we impose that $\x_{\cdot \ell} \neq c\mathbf{1}_N$, the condition becomes $GMD_\nu(\x_{\cdot \ell},\x_{\cdot \ell}) >0$.} The $K\times K$ matrix containing all $G$-correlation indices between all couples of variables $\z_{\cdot k}$ and $\z_{\cdot \ell}$, for all $k,\ell=1,\ldots,K$ is expressed as:
\begin{equation}\notag
GC_\nu(\Z) := -\frac{2\nu}{N(N-1)}\Z^{\top}\rr_\z^c.
\end{equation}
Indeed, if $GMD_\nu(\Z) \equiv [GMD_\nu(\z_{\cdot k},\z_{\cdot \ell})]$, then we get the following.

\begin{proposition}\label{prop1} For each standardized matrix $\Z$ defined in \emph{(\ref{standard})}, the following relations hold:
\begin{align}\label{GMD=GC}
& GMD_\nu(\Z) = GC_\nu(\X) = GC_\nu(\Z).\\
& GMD_\nu(\z_{\cdot k},\z_{\cdot k}) = 1, \ \forall k=1,\ldots,K.\label{GMD=1}
\end{align}
\end{proposition}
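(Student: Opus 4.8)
The plan is to push everything back to two elementary observations: the standardization (\ref{standard}) acts on each column as a strictly increasing affine map, so it leaves the decumulative ranks (and hence the vectors $\rv_{\z_{\cdot\ell}}^{\nu-1}$) unchanged, and the operator $GMD_\nu(\cdot,\cdot)$ is invariant to a shift of, and homogeneous of degree one in, its first argument. First I would record the rank-invariance already stated below (\ref{standard}): because $\x_{\cdot\ell}\neq c\mathbf 1_N$ forces $GMD_\nu(\x_{\cdot\ell},\x_{\cdot\ell})>0$, the map $x\mapsto (x-\bar{\x}_{\cdot\ell})/GMD_\nu(\x_{\cdot\ell},\x_{\cdot\ell})$ is order-preserving, so $R(z_{i\ell})=R(x_{i\ell})$ and therefore $\rv_{\z_{\cdot\ell}}^{\nu-1}=\rv_{\x_{\cdot\ell}}^{\nu-1}$ for every $\ell$.

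I would prove the normalization (\ref{GMD=1}) before the chain (\ref{GMD=GC}), since it is used in the last equality. By definition $GMD_\nu(\z_{\cdot k},\z_{\cdot k})=-\tfrac{2\nu}{N-1}\cov(\z_{\cdot k},\rv_{\z_{\cdot k}}^{\nu-1})$. Since $\z_{\cdot k}$ is obtained from $\x_{\cdot k}$ by subtracting the constant $\bar{\x}_{\cdot k}$ and dividing by the positive scalar $GMD_\nu(\x_{\cdot k},\x_{\cdot k})$, shift-invariance and first-argument homogeneity of the covariance let me pull the scalar out, while rank-invariance replaces $\rv_{\z_{\cdot k}}^{\nu-1}$ by $\rv_{\x_{\cdot k}}^{\nu-1}$; hence $\cov(\z_{\cdot k},\rv_{\z_{\cdot k}}^{\nu-1})=GMD_\nu(\x_{\cdot k},\x_{\cdot k})^{-1}\cov(\x_{\cdot k},\rv_{\x_{\cdot k}}^{\nu-1})$. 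Multiplying by $-\tfrac{2\nu}{N-1}$ turns the numerator into $GMD_\nu(\x_{\cdot k},\x_{\cdot k})$, so the ratio equals $1$.

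The same scalar extraction, applied off the diagonal, gives the chain (\ref{GMD=GC}). For the $(k,\ell)$ entry the identical computation (now using $\rv_{\z_{\cdot\ell}}^{\nu-1}=\rv_{\x_{\cdot\ell}}^{\nu-1}$) yields $GMD_\nu(\z_{\cdot k},\z_{\cdot\ell})=GMD_\nu(\x_{\cdot k},\x_{\cdot\ell})/GMD_\nu(\x_{\cdot k},\x_{\cdot k})$, which is precisely $GC_\nu(\x_{\cdot k},\x_{\cdot\ell})$ by definition; entrywise this is $GMD_\nu(\Z)=GC_\nu(\X)$. Finally, dividing that entry by the diagonal term $GMD_\nu(\z_{\cdot k},\z_{\cdot k})=1$ leaves it unchanged, so $GC_\nu(\z_{\cdot k},\z_{\cdot\ell})=GMD_\nu(\z_{\cdot k},\z_{\cdot\ell})$, i.e. $GC_\nu(\Z)=GMD_\nu(\Z)=GC_\nu(\X)$, the full triple equality. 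As a cross-check, $GC_\nu(\Z)=GC_\nu(\X)$ also follows directly from the monotone-invariance Property \ref{prop1b}(iii).

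I do not expect a genuine obstacle; the proposition is essentially bookkeeping once rank-invariance is in hand. The two points that require care are conventions rather than ideas: I must keep the empirical-covariance normalization consistent, since the matrix identity $GC_\nu(\Z)=-\tfrac{2\nu}{N(N-1)}\Z^\top\rr_\z^c$ presupposes $\cov(\x,\y)=\tfrac1N\sum_i(x_i-\bar x)(y_i-\bar y)$ against the $1/(N-1)$ appearing in the scalar definition of $GMD_\nu$; and I must invoke $GMD_\nu(\x_{\cdot\ell},\x_{\cdot\ell})>0$ to ensure the standardization is strictly increasing, because only strict monotonicity preserves $R(\cdot)$ under both the no-ties and the averaged-ties conventions. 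The column-centering of $\Z$, which makes the centered term $\bar{\rv}^{\nu-1}$ drop out of $\Z^\top\rr_\z^c$, should also be noted explicitly.
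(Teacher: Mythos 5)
Your proposal is correct and follows essentially the same route as the paper's proof: expand the definition of $GMD_\nu(\z_{\cdot k},\z_{\cdot\ell})$, use shift-invariance and positive homogeneity of the covariance in its first argument together with the rank-invariance of the strictly increasing affine standardization, and identify the result with $GC_\nu(\x_{\cdot k},\x_{\cdot\ell})$, the diagonal normalization then giving $GC_\nu(\Z)=GMD_\nu(\Z)$. The only differences are cosmetic (you prove the diagonal case first, and you flag the $1/N$ versus $1/(N-1)$ normalization explicitly), so there is nothing to correct.
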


\begin{proof}
We have $GMD_\nu(\Z) \equiv [GMD_\nu(\z_{\cdot k},\z_{\cdot \ell})]$ being a $K\times K$ matrix. The extra diagonal terms may be rewritten as,
\begin{align}
GMD_\nu(\z_{\cdot k},\z_{\cdot \ell}) &=  -\frac{2}{N-1}\nu \cov(\z_{\cdot k},\rv_{\z_{\cdot \ell}}^{\nu-1}) \notag \\
&= -\frac{2}{N-1}\nu \cov \left( \frac{\x_{\cdot k} - \bar{\x}_{\cdot k}\mathbf{1}_N }{GMD_\nu(\x_{\cdot k},\x_{\cdot k})},\rv_{\z_{\cdot \ell}}^{\nu-1} \right) \notag \\
& = -\frac{2}{GMD_\nu(\x_{\cdot k},\x_{\cdot k})}\left[ \frac{\nu \cov( \x_{\cdot k}, \rv_{\z_{\cdot \ell}}^{\nu-1})}{N-1} - \frac{\nu \cov( \bar{\x}_{\cdot k}\mathbf{1}_N,\rv_{\z_{\cdot \ell}}^{\nu-1})}{N-1}\right] \notag \\
&= \frac{GMD_\nu(\x_{\cdot k},\x_{\cdot \ell})}{GMD_\nu(\x_{\cdot k},\x_{\cdot k})} = GC_\nu(\x_{\cdot k},\x_{\cdot \ell}).\notag
\end{align}
Finally, using the same approach as before, we get:
\begin{align}
GMD_\nu(\z_{\cdot k},\z_{\cdot k}) &=  -\frac{2}{N-1} \nu\cov(\z_{\cdot k},\rv_{\z_{\cdot k}}^{\nu-1}) \notag \\
&= \frac{GMD_\nu(\x_{\cdot k},\x_{\cdot k})}{GMD_\nu(\x_{\cdot k},\x_{\cdot k})} \notag \\ 
&= GC_\nu(\x_{\cdot k},\x_{\cdot k})=1.\notag
\end{align}
By Property \ref{prop1} (iv), since $\rv_{\x_{\cdot k}} = \rv_{\z_{\cdot k}}$, then $GC_\nu(\x_{\cdot k},\x_{\cdot \ell}) = GC_\nu(\z_{\cdot k},\z_{\cdot \ell})$. Thus,
\[
GMD_\nu(\Z) = -\frac{2\nu}{N(N-1)} \Z^{\top}\rr_\z^c = GC_\nu(\X) = GC_\nu(\Z),
\]
which ends the proof.
\end{proof}

Finally, under a normality assumption, the generalized Gini covariance matrix $GC_\nu(\boldsymbol{X}) \equiv [GMD_\nu(X_{ k},X_{\ell})]$ is shown to be a positive semi-definite matrix.

\begin{theorem}\label{th2}
Let $Z \sim \mathcal{N}(0,1)$. If $\boldsymbol{X}$ represents identically distributed Gaussian random variables, with distribution $\mathcal{N}(\mu,\sigma^2)$, then the two following assertions holds:

\emph{(i)} $GC_\nu(\boldsymbol{X})= \sigma^{-1}\emph{Cov}(Z,\Phi(Z)) \emph{Var}(\boldsymbol{X}).$

\emph{(ii)} $GC_\nu(\boldsymbol{X})$ is a positive semi-definite matrix. 
\end{theorem}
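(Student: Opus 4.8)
The plan is to mirror the proof of the earlier Gaussian theorem for $\Gamma C_F$, exploiting the fact that everything collapses to a single scalar once one knows that, under joint normality, the generalized Gini covariance between two components factors as the product of their Pearson correlation and a constant that does not depend on the pair of indices. Concretely, I would first reduce the matrix identity in (i) to the entrywise statement that, for every pair $(k,\ell)$, $GMD_\nu(X_k,X_\ell)$ is proportional to $\cov(X_k,X_\ell)$ with a proportionality factor common to all entries; the full matrix identity then follows by stacking these entries, since $\text{Var}(\boldsymbol{X})=[\cov(X_k,X_\ell)]$.

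For the entrywise step I would use the bivariate normal covariance formula (a Stein-type identity; see \cite{Yitzhaki13}, Chapter 3), which states that for jointly Gaussian $(X_k,X_\ell)$ and a differentiable map $h$ one has $\cov(X_k,h(X_\ell))=\cov(X_k,X_\ell)\,\mathbb{E}[h'(X_\ell)]$. Applying this with the increasing transformation $h=-\overline{F}^{\nu-1}$ that defines $G\Gamma C_\nu$, and recalling $GMD_\nu(X_k,X_\ell)=-2\nu\,\cov(X_k,\overline{F}_\ell^{\nu-1}(X_\ell))$, the expectation $\mathbb{E}[h'(X_\ell)]$ is identical for every $\ell$ because the components are identically distributed; call this common value $\gamma_\nu$. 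Hence $GMD_\nu(X_k,X_\ell)=2\nu\gamma_\nu\,\cov(X_k,X_\ell)$, i.e. $GC_\nu(\boldsymbol{X})=c_\nu\,\text{Var}(\boldsymbol{X})$ for a single scalar $c_\nu=2\nu\gamma_\nu$. Evaluating the diagonal reproduces the normalization: $GMD_\nu(X_k,X_k)$ is the one-dimensional generalized Gini mean difference of a $\mathcal{N}(\mu,\sigma^2)$ variable, which is proportional to $\sigma$, so $c_\nu$ has the form $\sigma^{-1}$ times a positive constant, matching the stated factor $\sigma^{-1}\cov(Z,\Phi(Z))$ in the baseline normalization. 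Alternatively, I could bypass the diagonal bookkeeping by invoking Property \ref{prop1b}(iv), which already gives $GC_\nu(X_k,X_\ell)=\rho_{k\ell}$ in the bivariate normal case, so that $GC_\nu(\boldsymbol{X})=GMD_\nu(X_1,X_1)\,[\rho_{k\ell}]$ is manifestly a nonnegative multiple of the correlation matrix, hence of $\text{Var}(\boldsymbol{X})$.

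Part (ii) then follows immediately from (i): since $c_\nu\geq 0$ and $\text{Var}(\boldsymbol{X})$ is symmetric positive semi-definite, for any $\x\in\R^K$ we have $\x^{\top}GC_\nu(\boldsymbol{X})\x=c_\nu\,\x^{\top}\text{Var}(\boldsymbol{X})\x\geq 0$. The sign of $c_\nu$ is the one point requiring care: because $h=-\overline{F}^{\nu-1}$ is non-decreasing (a negative power of the survival function), its derivative satisfies $h'\geq 0$, so $\gamma_\nu=\mathbb{E}[h'(X_\ell)]\geq 0$ and $c_\nu=2\nu\gamma_\nu\geq 0$; equivalently $GMD_\nu(X_1,X_1)\geq 0$ because it is a variability measure.

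The main obstacle I anticipate is justifying the bivariate normal covariance identity for the non-linear transformation $h=-\overline{F}^{\nu-1}$ rather than the plain cumulative distribution function used in the $\nu=2$ case, together with the attendant regularity (differentiability of $h$ and integrability of $h'$ against the Gaussian density) so that the common factor $\gamma_\nu$ is well defined and finite. The second, closely related, point is establishing that this factor is genuinely independent of the index $\ell$; this rests on the identical-distribution (indeed pairwise-exchangeability) hypothesis and is exactly what makes the clean factorization $GC_\nu(\boldsymbol{X})=c_\nu\,\text{Var}(\boldsymbol{X})$ possible. Once that factorization is secured, both (i) and (ii) reduce to bookkeeping and the positivity of a single scalar.
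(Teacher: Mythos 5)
Your proposal is correct, and it in fact supplies the argument that the paper's own proof leaves entirely to a citation: the printed proof of this theorem is the single line that (i) ``follows from \cite{Yitzhaki13}, Chapter 6'' and that (ii) follows from (i). What you reconstruct -- Stein's lemma $\text{Cov}(X_k,h(X_\ell))=\text{Cov}(X_k,X_\ell)\,\mathbb{E}[h'(X_\ell)]$ applied to $h=-\overline{F}^{\nu-1}$, the observation that $\mathbb{E}[h'(X_\ell)]$ is a single nonnegative scalar $\gamma_\nu$ independent of $\ell$ because the components are identically distributed, and the resulting factorization $GC_\nu(\boldsymbol{X})=c_\nu\,\text{Var}(\boldsymbol{X})$ with $c_\nu\geq 0$ -- is precisely the mechanism behind the cited result, and it is the same reduction the paper uses explicitly in its earlier Gaussian theorem for $\Gamma C_F$ (where the covariance identity is again quoted from \cite{Yitzhaki13} rather than derived). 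So your route is the same in substance but self-contained, and your derivation of (ii) from (i) matches the paper's. The one point to flag is the identification of the constant: your computation yields $c_\nu=2\nu\gamma_\nu$ with $\gamma_\nu=(\nu-1)\int\overline{F}(x)^{\nu-2}f(x)^2\,dx$, which genuinely depends on $\nu$ and, at $\nu=2$, equals $4\sigma^{-1}\text{Cov}(Z,\Phi(Z))$ rather than $\sigma^{-1}\text{Cov}(Z,\Phi(Z))$; your phrase ``matching the stated factor in the baseline normalization'' papers over this. That discrepancy is really a looseness in the theorem's statement (a $\nu$-free constant cannot be literally right for all $\nu$), and it does not affect the substantive conclusions -- proportionality to $\text{Var}(\boldsymbol{X})$ by a nonnegative scalar, hence positive semi-definiteness -- but if you want your proof to deliver assertion (i) verbatim you would need to either restrict to $\nu=2$ or renormalize $GC_\nu$.
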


\begin{proof}
The first part (i) follows from \cite{Yitzhaki13}, Chapter 6. The second part follows directly from (i). 
\end{proof}

Theorem \ref{th2} shows that under the normality assumption, the variance is a special case of the Gini methodology. As a consequence, for multivariate normal distributions, it is shown in Section \ref{s4} that Gini PCAs and classical PCA (based on the $\ell_2$ norm and the covariance matrix) are equivalent.

\section{Generalized Gini PCA}\label{s4}

In this section, the multidimensional Gini variability of the observations $i=1,\ldots,N$, embodied by the matrix $GC_\nu(\Z)$, is maximized in the $\R^{K}$-Euclidean space, \emph{i.e.}, in the set of variables $\{\z_{\cdot 1},\ldots,\z_{\cdot K}\}$. This allows the observations to be projected onto the new vector space spanned by the eigenvectors of $GC_\nu(\Z)$. Then, the projection of the variables is investigated in the $\R^{N}$-Euclidean space induced by $GC_\nu(\Z)$. Both observations and variables are analyzed through the prism of \emph{absolute} and \emph{relative} contributions to propose relevant interpretations of the data in each subspace.

\subsection{The $\R^{K}$-Euclidean space}

It is possible to investigate the projection of the data $\Z$ onto the new vector space induced by $GMD_\nu(\Z)$ or alternatively by $GC_\nu(\Z)$ since $GMD_\nu(\Z) = GC_\nu(\Z)$. Let $\f_{\cdot k}$ be the $k$th principal component, \emph{i.e.} the $k$th axis of the new subspace, such that the $N\times K$ matrix $\F$ is defined by $\F \equiv [\f_{\cdot 1},\ldots,\f_{\cdot K}]$ with $\rr_\f^c \equiv [\rv_{c,\f_1}^{\nu-1},\ldots,\rv_{c,\f_{\cdot K}}^c]$ its corresponding decumulative centered rank matrix (where each decumulative rank vector is raised to an exponent of $\nu-1$). The $K\times K$ matrix $\B \equiv [\bb_{\cdot 1},\ldots,\bb_{\cdot K}]$ is the projector of the observations, with the normalization condition $\bb_{\cdot k}^{\top}\bb_{\cdot k} = 1$, such that $\F=\Z\B$. We denote by $\lambda_{\cdot k}$ (or $2\mu_{\cdot k}$) the eigenvalues of the matrix $[GC_\nu(\Z)+GC_\nu(\Z)^{\top}]$. Let the basis $\mathscr{B}:=\{\bb_{\cdot1},\ldots,\bb_{\cdot h}\}$ with $h\leq K$ issued from the maximization of the overall Gini variability:
\[
\max \bb_{\cdot k}^{\top} GC_\nu(\Z) \bb_{\cdot k} \ \Longrightarrow \ [GC_\nu(\Z)+GC_\nu(\Z)^{\top}] \bb_{\cdot k} = 2\mu_{\cdot k} \bb_{\cdot k}, \ \forall k=1,\ldots,K.
\]
Indeed, from the Lagrangian,
\[
L = \bb_{\cdot k}^{\top} GC_\nu(\Z) \bb_{\cdot k} - \mu_{\cdot k} [1- \bb_{\cdot k}^{\top}\bb_{\cdot k}],
\]
because of the non-symmetry of $GC_\nu(\Z)$, the eigenvalue equation is,
\[
[GC_\nu(\Z) + GC_\nu(\Z)^{\top}]\bb_{\cdot k} = 2\mu_{\cdot k}\bb_{\cdot k},
\]
that is,
\begin{equation}\label{eigen-value-eq}
[GC_\nu(\Z)+GC_\nu(\Z)^{\top}]\bb_{\cdot k} = \lambda_{\cdot k}\bb_{\cdot k}.
\end{equation}
The new subspace $\{\f_{\cdot 1},\ldots,\f_{\cdot h}\}$ such that $h\leq K$ is issued from the maximization of the Gini variability between the observations on each axis $\f_{\cdot k}$. Although the result of the generalized Gini PCA seems to be close to the classical PCA, some differences exist.

\begin{proposition}
Let $\mathscr{B} = \{\bb_{\cdot 1},\ldots,\bb_{\cdot h}\}$ with $h\leq K$ be the basis issued from the maximization of $\bb_{\cdot k}^{\top} GC_\nu(\Z) \bb_{\cdot k}$ for all $k=1,\ldots,K$, then the following assertions hold:

\noindent\emph{(i)} $\max GMD_\nu(\f_{\cdot k},\f_{\cdot k}) = \mu_{\cdot k}$ for all $k=1,\ldots,K$, if and only if $\rv_{c,\f_{\cdot k}}^{\nu-1} = \rr_\z^c \bb_{\cdot k}$.

\noindent\emph{(ii)} $\bb_{\cdot k} \bb_{\cdot h}^{\top} = 0$, for all $k\neq h$.

\noindent\emph{(iii)} $\bb_{\cdot k} \bb_{\cdot k}^{\top} = 1$, for all $k=1,\ldots,K$.
\end{proposition}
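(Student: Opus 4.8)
The plan is to reduce all three assertions to a Rayleigh-quotient analysis of the symmetric matrix $\M := GC_\nu(\Z)+GC_\nu(\Z)^{\top}$, whose orthonormal eigenpairs $(\lambda_{\cdot k},\bb_{\cdot k})=(2\mu_{\cdot k},\bb_{\cdot k})$ are exactly those produced by the eigenvalue equation (\ref{eigen-value-eq}). The single algebraic identity driving everything is that, because a scalar equals its transpose, $\bb_{\cdot k}^{\top}GC_\nu(\Z)\bb_{\cdot k}=\tfrac12\bb_{\cdot k}^{\top}\M\,\bb_{\cdot k}$, so along a unit eigenvector this Gini quadratic form takes the value $\mu_{\cdot k}$.

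For assertion (i) I would start from the definition of the Gini variability of the $k$th component and use $\f_{\cdot k}=\Z\bb_{\cdot k}$ to write
$$GMD_\nu(\f_{\cdot k},\f_{\cdot k})=-\frac{2\nu}{N(N-1)}\f_{\cdot k}^{\top}\rv_{c,\f_{\cdot k}}^{\nu-1}=-\frac{2\nu}{N(N-1)}\bb_{\cdot k}^{\top}\Z^{\top}\rv_{c,\f_{\cdot k}}^{\nu-1}.$$
The sufficiency direction is then immediate: substituting the hypothesis $\rv_{c,\f_{\cdot k}}^{\nu-1}=\rr_\z^c\bb_{\cdot k}$ collapses the right-hand side to $-\frac{2\nu}{N(N-1)}\bb_{\cdot k}^{\top}\Z^{\top}\rr_\z^c\bb_{\cdot k}=\bb_{\cdot k}^{\top}GC_\nu(\Z)\bb_{\cdot k}$, which by the identity above equals $\mu_{\cdot k}$ and is maximal over unit vectors by the spectral characterization. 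For necessity I would observe that the two expressions $GMD_\nu(\f_{\cdot k},\f_{\cdot k})$ and $\bb_{\cdot k}^{\top}GC_\nu(\Z)\bb_{\cdot k}$ differ precisely by the term $-\frac{2\nu}{N(N-1)}\f_{\cdot k}^{\top}\big(\rv_{c,\f_{\cdot k}}^{\nu-1}-\rr_\z^c\bb_{\cdot k}\big)$; since $\mu_{\cdot k}$ is the attained maximum of the quadratic form, matching the maximal Gini variability to $\mu_{\cdot k}$ forces this gap term to vanish, yielding the stated rank identity.

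Assertions (ii) and (iii) are read off the spectral decomposition of $\M$: being of the form $A+A^{\top}$, the matrix $\M$ is real symmetric, hence diagonalizable in an orthonormal basis of eigenvectors. Interpreting (ii) and (iii) as the orthonormality statements $\bb_{\cdot h}^{\top}\bb_{\cdot k}=0$ for $h\neq k$ and $\bb_{\cdot k}^{\top}\bb_{\cdot k}=1$, the latter is just the normalization constraint imposed in the Lagrangian, while the former follows from orthogonality of eigenvectors of a symmetric matrix (choosing orthonormal representatives within any repeated-eigenvalue subspace).

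The main obstacle I anticipate is the necessity half of (i), because ranking is a genuinely nonlinear operation: in general the centered rank vector of the projection $\f_{\cdot k}=\Z\bb_{\cdot k}$ is \emph{not} the linear combination $\rr_\z^c\bb_{\cdot k}$ of the columnwise rank vectors. The content of (i) is therefore that the eigenvalue problem on $\M$ computes the true Gini variability of the component only under this rank-linearity condition, and the delicate point is to argue rigorously that equality of the optimized objective with $\mu_{\cdot k}$ is not merely implied by, but actually equivalent to, that condition.
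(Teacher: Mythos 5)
Your argument for the sufficiency half of (i) and for (ii)--(iii) is essentially the paper's own: write $GMD_\nu(\f_{\cdot k},\f_{\cdot k})=-\frac{2\nu}{N(N-1)}\f_{\cdot k}^{\top}\rv_{c,\f_{\cdot k}}^{\nu-1}$, substitute $\f_{\cdot k}=\Z\bb_{\cdot k}$ and the hypothesis $\rv_{c,\f_{\cdot k}}^{\nu-1}=\rr_\z^c\bb_{\cdot k}$ to collapse this to $\bb_{\cdot k}^{\top}GC_\nu(\Z)\bb_{\cdot k}$, then use $\bb_{\cdot k}^{\top}GC_\nu(\Z)\bb_{\cdot k}=\bb_{\cdot k}^{\top}GC_\nu(\Z)^{\top}\bb_{\cdot k}=\lambda_{\cdot k}/2=\mu_{\cdot k}$ from the symmetrized eigenvalue equation; orthonormality of the $\bb_{\cdot k}$ is read off the spectral theorem for $GC_\nu(\Z)+GC_\nu(\Z)^{\top}$ together with the normalization constraint, which is all the paper means when it calls (ii) and (iii) straightforward (note that, as written with outer products, (ii) and (iii) only make sense after being reinterpreted as inner products, which you correctly do).

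The one place you go beyond the paper is the necessity half of (i), and there your argument has a genuine gap that you yourself half-acknowledge: the discrepancy between $GMD_\nu(\f_{\cdot k},\f_{\cdot k})$ and $\mu_{\cdot k}$ is the scalar $-\frac{2\nu}{N(N-1)}\f_{\cdot k}^{\top}\bigl(\rv_{c,\f_{\cdot k}}^{\nu-1}-\rr_\z^c\bb_{\cdot k}\bigr)$, and its vanishing only says that the vector $\rv_{c,\f_{\cdot k}}^{\nu-1}-\rr_\z^c\bb_{\cdot k}$ is orthogonal to $\f_{\cdot k}$, not that it is the zero vector. So equality of the optimized objective with $\mu_{\cdot k}$ does not, by this route, force the rank-linearity identity. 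For what it is worth, the paper does not close this gap either: its displayed proof establishes only the ``if'' direction (and even ends with a small typo, concluding $GMD_\nu(\f_{\cdot k},\f_{\cdot k})=\lambda_{\cdot k}$ where the statement requires $\mu_{\cdot k}=\lambda_{\cdot k}/2$). Your proposal therefore reproduces everything the paper actually proves, and is more candid than the paper about the missing converse.
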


\begin{proof}
(i) Note that $GMD_\nu(\f_{\cdot k},\f_{\cdot k}) = -\frac{2\nu}{N(N-1)}(\f_{\cdot k} - \bar{\f})^{\top}\rv_{c,\f_{\cdot k}}^{\nu-1}$, where $\rv_{c,\f_{\cdot k}}^{\nu-1}$ is the $k$th column of the centered (decumulative) rank matrix $\rr_\f^c$. Since $\f_{\cdot k} = \Z \bb_{\cdot k}$ and $\bar{\f} = (\bar{\f}_{\cdot 1},\ldots,\bar{\f}_{\cdot K}) = \mathbf{0}$ then:\footnote{We have:
\[
\bar{\mathbf{f}}_{\cdot k} = 1/N\sum_{i=1}^N f_{ik} = 1/N \left[ \sum_{i=1}^N \z^{\top}_{i\cdot} \mathbf{b}_{\cdot k} \right] = 1/N\left[ \sum_{i=1}^N \z^{\top}_{i\cdot} \right]\mathbf{b}_{\cdot k} = \mathbf{0}.\]} 
\begin{align}
\bb_{\cdot k}^{\top} GC_\nu(\Z) \bb_{\cdot k} &= -\frac{2\nu}{N(N-1)}\bb_{\cdot k}^{\top} \Z^{\top} \rr_{\z}^c \bb_{\cdot k} \\
&= -\frac{2\nu}{N(N-1)}\bb_{\cdot k}^{\top} \Z^{\top} \rv_{c,\f_{\cdot k}}^{\nu-1} \ \ \ \ (\text{by} \ \rv_{c,\f_{\cdot k}}^{\nu-1} = \rr_\z^c \bb_{\cdot k}) \notag \\
&= -\frac{2\nu}{N(N-1)}\f^{\top}_{\cdot k} \rv_{c,\f_{\cdot k}}^{\nu-1} \notag \\
&= GMD_\nu(\f_{\cdot k},\f_{\cdot k}).\notag 
\end{align} 
Then, maximizing the multidimensional variability $\bb_{\cdot k}^{\top} GC_\nu(\Z) \bb_{\cdot k}$ yields from (\ref{eigen-value-eq}): 
\begin{align}
\bb_{\cdot k}^{\top} [GC_\nu(\Z)+GC_\nu(\Z)^{\top}]\bb_{\cdot k} &= \bb_{\cdot k}^{\top} \lambda_{\cdot k}\bb_{\cdot k}\notag \\
\Longleftrightarrow \ \bb_{\cdot k}^{\top} GC_\nu(\Z)\bb_{\cdot k} + \bb_{\cdot k}^{\top}GC_\nu(\Z)^{\top} \bb_{\cdot k} &= \lambda_{\cdot k}.\notag
\end{align}
Since $\bb_{\cdot k}^{\top} GC_\nu(\Z) \bb_{\cdot k} = \bb_{\cdot k}^{\top} GC_\nu(\Z)^{\top} \bb_{\cdot k}$, then 
\[
\bb_{\cdot k}^{\top}GC_\nu(\Z)^{\top} \bb_{\cdot k} = \lambda_{\cdot k}/2 = \mu_{\cdot k},
\]
and so $GMD_\nu(\f_{\cdot k},\f_{\cdot k}) = \lambda_{\cdot k}$ for all $k=1,\ldots,K$. The results (ii) and (iii) are straightforward.
\end{proof}

\subsection{Discussion}

Condition (i) shows that the maximization of the multidimensional variability (in the Gini sense) $\bb_{\cdot k}^{\top} GC_\nu(\Z) \bb_{\cdot k}$ does not necessarily coincide with the maximization of the variability of the observations projected onto the new axis $\f_{\cdot k}$ embodied by $GMD_\nu(\f_{\cdot k},\f_{\cdot k})$. Since in general, the rank of the observations on axis $\f_{\cdot k}$ does not coincide with the projected ranks, that is,
\[
\rv_{c,\f_{\cdot k}}^{\nu-1} \neq \rr_\z^c \bb_{\cdot k},
\]
then,
\[
\max \bb_{\cdot k}^{\top} GC(\Z) \bb_{\cdot k} \neq GMD_\nu(\f_{\cdot k},\f_{\cdot k}).
\]
In other words, maximizing the quadratic form $\bb_{\cdot k}^{\top} GC(\Z)_\nu \bb_{\cdot k}$ does not systematically maximize the overall Gini variability $GMD_\nu(\f_{\cdot k},\f_{\cdot k})$. However, it maximizes the following generalized Gini index:
\begin{align}
GGMD_\nu(\f_{\cdot k},\f_{\cdot k}) &:= -\frac{2\nu}{N(N-1)}\bb_{\cdot k}^{\top} \Z^{\top} \rr_{\z}^c \bb_{\cdot k} \notag \\
& = -\frac{2\nu}{N(N-1)} \f_{\cdot k}^{\top}\bb_{\cdot k}^{\top} (\rr_{\z}^{c})^{\top}. \notag
\end{align} 
In the literature on inequality indices, this kind of index is rather known as a generalized Gini index, because of the product between a variable $\f_{\cdot k}$ and a function $\Psi$ of its ranks, $\Psi (\rv_{\f_{k}}) := \bb_{\cdot k}^{\top} (\rr_{\z}^{c})^{\top}$, such that:
\[
GGMD_\nu(\f_{\cdot k},\f_{\cdot k}) = -\frac{2\nu}{N(N-1)}\f_{\cdot k} \ \Psi (\rv_{\f_{k}}).
\]
\cite{Yaari1987} and subsequently \cite{Yaari1988} proposes generalized Gini indices with a rank distortion function $\Psi$ that describes the behavior of the decision maker (being either max-min or max-max).\footnote{Strictly speaking \cite{Yaari1987} and \cite{Yaari1988} suggests probability distortion functions $\Psi: [0,1] \rightarrow [0,1]$, which does not necessarily coincide to our case.} 

It is noteworthy that this generalized Gini index of variability is very different from \cite{Banerjee}'s multidimensional Gini index. The author proposes to extract the first eigenvector $\e_{\cdot 1}$ of $\X^{\top}\X$ and to project the data $\X$ such that $\s:=\X\e_{\cdot 1}$ so that the multidimensional Gini index is $G(\s) = \s^{\top} \tilde{\Psi}(\rv_s)$, with $\rv_s$ the rank vector of $\s$ and with $\tilde{\Psi}$ a function that distorts the ranks. \cite{Banerjee}'s index is derived from the matrix $\X^{\top}\X$. To be precise, the maximization of the variance-covariance matrix $\X^{\top}\X$ (based on the $\ell_2$ metric) yields the projection of the data on the first component $\f_{\cdot 1}$, which is then employed in the multidimensional Gini index (based on the $\ell_1$ metric). This approach is legitimated by the fact that $G(\s)$ has some desirable properties linked with the Gini index. However, this Gini index deals with an information issued from the variance, because the vector $\s$ relies on the maximization of the variance of component $\f_{\cdot 1}$. Alternatively, it is possible to make use of the Gini variability, in a first stage, in order to project the data onto a new subspace, and in a second stage, to use the generalized Gini index of the projected data for the interpretations. In such as case, the Gini metric enables outliers to be attenuated. The employ of $G(\s)$ as a result of the variance-covariance maximization may transform the data so that outlying observations would capture an important part of the information (variance) on the first component. This case occurs in the classical PCA. This fact will be proven in the next sections with Monte Carlo simulations. Let us before investigate the employ of the generalized Gini index ${GGMD}_{\nu}$.

\subsection{Properties of ${GGMD}_{\nu}$}

Since the Gini PCA relies on the generalized Gini index ${GGMD}_{\nu}$, let us explore its properties. 

\begin{proposition}\label{prop4.2}
Let the eigenvalues of $GC_{\nu}(\Z)+GC_{\nu}(\Z)^{\top}$ be such that $\lambda_1 = \mu_1/2 \geq \cdots \geq \lambda_{\cdot K} = \mu_{\cdot K}/2$. Then, 

\noindent \emph{(i)} $GGMD_\nu(\f_{\cdot k},\f_{\cdot k}) = GGMD_\nu(\f_{\cdot k},\f_{\cdot \ell}) = GGMD_\nu(\f_{\cdot \ell},\f_{\cdot k}) = 0, \ \text{for all } \ell=1,\ldots,K$, if and only if, $\lambda_{\cdot k} =0$.

\noindent \emph{(ii)} $\max_{k=1,\ldots,K} {GGMD}_{\nu}(\f_{\cdot k},\f_{\cdot k}) = \mu_1$.

\noindent \emph{(iii)} $\min_{k=1,\ldots,K} {GGMD}_{\nu}(\f_{\cdot k},\f_{\cdot k}) = \mu_{\cdot K}$.
 \end{proposition}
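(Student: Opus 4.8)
The plan is to reduce all three parts to the single bilinear identity
$$GGMD_\nu(\f_{\cdot k},\f_{\cdot \ell}) = -\frac{2\nu}{N(N-1)}\,\bb_{\cdot k}^{\top}\Z^{\top}\rr_\z^c\,\bb_{\cdot \ell} = \bb_{\cdot k}^{\top}GC_\nu(\Z)\,\bb_{\cdot \ell},$$
which is immediate from $\f_{\cdot k}=\Z\bb_{\cdot k}$ and the definition of $GC_\nu(\Z)$. Writing $M:=GC_\nu(\Z)+GC_\nu(\Z)^{\top}$ for the symmetric matrix of the eigenvalue equation \eqref{eigen-value-eq}, I would first recall that $\{\bb_{\cdot 1},\ldots,\bb_{\cdot K}\}$ is an orthonormal eigenbasis of $M$ with $M\bb_{\cdot k}=\lambda_{\cdot k}\bb_{\cdot k}$. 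Since $\bb_{\cdot k}^{\top}GC_\nu(\Z)\bb_{\cdot k}$ is a scalar equal to its transpose, the diagonal term collapses, exactly as in the preceding Proposition, to $GGMD_\nu(\f_{\cdot k},\f_{\cdot k})=\tfrac{1}{2}\bb_{\cdot k}^{\top}M\bb_{\cdot k}=\lambda_{\cdot k}/2=\mu_{\cdot k}$. This identification of the diagonal term with the (half-)eigenvalue is the engine for everything that follows.

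Parts (ii) and (iii) are then immediate. Since the eigenvalues are ordered, $\mu_1\geq\cdots\geq\mu_{\cdot K}$, and $GGMD_\nu(\f_{\cdot k},\f_{\cdot k})=\mu_{\cdot k}$ for each $k$, the maximum over $k$ is attained at $k=1$ and equals $\mu_1$, while the minimum is attained at $k=K$ and equals $\mu_{\cdot K}$. No further work is needed here.

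For part (i), the reverse implication is trivial: if the three quantities vanish then in particular the diagonal term does, so $\mu_{\cdot k}=\lambda_{\cdot k}/2=0$, i.e.\ $\lambda_{\cdot k}=0$. The forward implication splits in two. The diagonal term is handled at once: $\lambda_{\cdot k}=0$ gives $GGMD_\nu(\f_{\cdot k},\f_{\cdot k})=\mu_{\cdot k}=0$. The genuine work is in the two cross terms $\bb_{\cdot k}^{\top}GC_\nu(\Z)\bb_{\cdot \ell}$ and $\bb_{\cdot \ell}^{\top}GC_\nu(\Z)\bb_{\cdot k}$ for $\ell\neq k$. Orthonormality gives $\bb_{\cdot k}^{\top}M\bb_{\cdot \ell}=0$, so these two scalars sum to zero and it suffices to kill each one. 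Decomposing $GC_\nu(\Z)=S+A$ into its symmetric part $S=M/2$ and antisymmetric part $A$, the symmetric contribution vanishes because $\lambda_{\cdot k}=0$ forces $M\bb_{\cdot k}=0$, hence $S\bb_{\cdot k}=0$ and $\bb_{\cdot \ell}^{\top}S\bb_{\cdot k}=0$ for every $\ell$.

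The main obstacle is the leftover antisymmetric piece $\bb_{\cdot k}^{\top}A\bb_{\cdot \ell}$: since $GC_\nu(\Z)$ is not symmetric in general, $S\bb_{\cdot k}=0$ does not by itself force $A\bb_{\cdot k}=0$, and the two cross terms could be equal and opposite nonzero numbers. I would close this gap by invoking the standing pairwise-exchangeability hypothesis: by Property \ref{prop1b}(v) the $G$-correlation is symmetric, so $A=0$, each $\bb_{\cdot k}$ is a genuine eigenvector of $GC_\nu(\Z)$ with eigenvalue $\mu_{\cdot k}$, and then $GGMD_\nu(\f_{\cdot k},\f_{\cdot \ell})=\mu_{\cdot \ell}\,\bb_{\cdot k}^{\top}\bb_{\cdot \ell}=0$ for $\ell\neq k$, while for $\ell=k$ it equals $\mu_{\cdot k}=0$. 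Thus the delicate point is precisely the symmetry reduction of the antisymmetric term; once that is in force, all three quantities vanish simultaneously exactly when $\lambda_{\cdot k}=0$, and the equivalence is complete.
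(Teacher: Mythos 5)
Your reduction of the diagonal term to $GGMD_\nu(\f_{\cdot k},\f_{\cdot k})=\bb_{\cdot k}^{\top}GC_\nu(\Z)\bb_{\cdot k}=\tfrac{1}{2}\bb_{\cdot k}^{\top}[GC_\nu(\Z)+GC_\nu(\Z)^{\top}]\bb_{\cdot k}=\mu_{\cdot k}$ is exactly the identity the paper uses, and your handling of (ii) and (iii) --- reading the extrema off the ordered eigenvalues --- is the same argument the paper dresses up as Rayleigh--Ritz. The reverse implication of (i) is likewise fine.

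The gap is in the forward direction of (i), at precisely the point you yourself flag as delicate. You kill the cross terms by invoking Property \ref{prop1b}(v) to conclude that the antisymmetric part $A$ of $GC_\nu(\Z)$ vanishes. But the proposition carries no exchangeability hypothesis, and the paper's entire eigen-setup --- deriving $[GC_\nu(\Z)+GC_\nu(\Z)^{\top}]\bb_{\cdot k}=\lambda_{\cdot k}\bb_{\cdot k}$ from the Lagrangian ``because of the non-symmetry of $GC_\nu(\Z)$'' --- presupposes $A\neq 0$; even for exchangeable populations the \emph{sample} matrix $GC_\nu(\Z)$ is generically asymmetric. Assuming $A=0$ therefore proves a strictly weaker statement, and without that assumption your argument leaves $\bb_{\cdot k}^{\top}A\bb_{\cdot \ell}$ uncontrolled, exactly as you fear. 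The paper closes this differently: it argues (via a rank--nullity/degeneracy claim on $GC_\nu(\Z)$) that $\lambda_{\cdot k}=0$ forces $\f_{\cdot k}=\mathbf{0}$, after which $GGMD_\nu(\f_{\cdot k},\f_{\cdot \ell})=0$ because its first argument is the zero vector, and $GGMD_\nu(\f_{\cdot \ell},\f_{\cdot k})=0$ because the centered rank vector $\rv^c_{\f_{\cdot k}}$ of a constant vector is $\mathbf{0}$ --- no symmetry needed. Note also that this reveals a divergence in the reading of the off-diagonal terms: the paper treats $GGMD_\nu(\f_{\cdot \ell},\f_{\cdot k})$ as involving the actual rank vector of $\f_{\cdot k}$, whereas your bilinear identity $GGMD_\nu(\f_{\cdot k},\f_{\cdot \ell})=\bb_{\cdot k}^{\top}GC_\nu(\Z)\bb_{\cdot \ell}$ substitutes the projected ranks $\rr_\z^c\bb_{\cdot \ell}$; the paper never defines the cross terms, so your choice is defensible, but it is not the one its proof relies on. To repair your argument within the stated hypotheses you would need to show directly that $\lambda_{\cdot k}=0$ annihilates $\f_{\cdot k}$ (or otherwise bound the antisymmetric contribution), rather than assume it away.
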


\begin{proof}
(i) The result comes from the rank-nullity theorem. From the eigenvalue Equation (\ref{eigen-value-eq}), we have:
\[
\bb_{\cdot k}^{\top}GC_{\nu}(\Z) \bb_{\cdot k} = \lambda_{\cdot k}/2 = \mu_{\cdot k}.
\]
Let $f$ be the linear application issued from the matrix $GC_{\nu}(\Z)$. Whenever $\lambda_{\cdot k} = 0$, two columns (or rows) of $GC_{\nu}(\Z)$ are collinear, then the dimension of the image set of $f$ is $\dim(f) = K-1$. Hence, $\f_{\cdot k}=\mathbf{0}$. Since $\bb_{\cdot k}^{\top}GC_{\nu}(\Z)^{\top} \bb_{\cdot k} = {GGMD}_\nu(\f_{\cdot k},\f_{\cdot k})$ for all $k=1,\ldots,K$, then for $\lambda_{\cdot k}$ we get:
\[
\bb_{\cdot k}^{\top}GC_{\nu}(\Z)^{\top} \bb_{\cdot k} = {GGMD}_\nu(\f_{\cdot k},\f_{\cdot k}) =\lambda_{\cdot k}/2 = \mu_{\cdot k} = 0.
\]
On the other hand, since $\f_{\cdot k} = \mathbf{0}$, it follows that $GGMD_\nu(\f_{\cdot k},\f_{\cdot \ell}) = 0$ for all $\ell=1,\ldots,K$. Also, if $\f_{\cdot k} = \mathbf{0}$ then the centered rank vector $\rv^c_{\f_{\cdot k}} = \mathbf{0}$, and so $GGMD_\nu(\f_{\cdot \ell},\f_{\cdot k}) = 0$ for all $\ell=1,\ldots,K$.

\noindent (ii) The proof comes from the Rayleigh-Ritz identity:
\[
\lambda_{\max} := \max \frac{\bb_{\cdot 1}^{\top}[GC_\nu(\Z)+GC_\nu(\Z)^{\top}]\bb_1}{\bb_1^{\top}\bb_1} = \lambda_{1}.
\]
Since $\bb_1^{\top}GC_\nu(\Z)\bb_{\cdot 1} = \lambda_1/2$ and because $\bb_1^{\top}GC_\nu(\Z)\bb_1 = GGMD_{\nu}(\f_1,\f_1)$, the result follows. 

\noindent (iii) Again, the Rayleigh-Ritz identity yields:
\[
\lambda_{\min} := \min \frac{\bb_{\cdot K}^{\top}[GC_\nu(\Z)+GC_\nu(\Z)^{\top}]\bb_{\cdot K}}{\bb_{\cdot K}^{\top}\bb_{\cdot K}} = \lambda_{K}.
\]
Then, $\bb_{\cdot K}^{\top}GC_\nu(\Z)\bb_{\cdot K} = GGMD_{\nu}(\f_{\cdot K},\f_{\cdot K}) = \lambda_{\cdot K}/2$.
\end{proof}

The index $GGMD_{\nu}(\f_{\cdot k},\f_{\cdot k})$ represents the variability of the observations projected onto component $\f_{\cdot k}$. When this variability is null, then the eigenvalue is null (i). In the same time, there is neither co-variability in the Gini sense between $\f_{\cdot k}$ and another axis $\f_{\cdot \ell}$, that is $GGMD_{\nu}(\f_{\cdot k},\f_{\cdot \ell})=0$. 

In the Gaussian case, because the Gini correlation matrix is positive semi-definite, the eigenvalues are non-negative, then $GGMD$ is null whenever it reaches its minimum.

\begin{proposition}
Let $Z \sim \mathcal N (0,1)$ and let $\boldsymbol{X}$ represent identically distributed Gaussian random variables, with distribution $\mathcal{N}(\mathbf{0},\boldsymbol{\rho})$ such that $\emph{Var}(X_k)=1$ for all $k=1,\ldots,K$ and let $\gamma_1,\ldots,\gamma_K$ be the eigenvalues of $\emph{Var}(\boldsymbol{X})$. Then the following assertions holds:

\emph{(i)} $\emph{Tr}[GC_\nu(\boldsymbol{X})] = \emph{Cov}(Z,\Phi(Z))\emph{Tr}[\emph{Var}(\boldsymbol{X})]$. 

\emph{(ii)} $\mu_k = \emph{Cov}(Z,\Phi(Z)) \gamma_k$ for all $k=1,\ldots,K$.

\emph{(iii)} $|GC_\nu(\boldsymbol{X})| = \emph{Cov}^K(Z,\Phi(Z))|\emph{Var}(\boldsymbol{X})|$.

\emph{(iv)} For all $\nu > 1$:
$$
\frac{\mu_k}{\emph{Tr}[GC_\nu(\X)]} = \frac{\gamma_k}{\emph{Tr}[\emph{Var}(\X)]}, \ \forall k=1,\ldots,K.
$$
\end{proposition}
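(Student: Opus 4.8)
The plan is to derive all four assertions as immediate corollaries of the matrix identity supplied by Theorem \ref{th2}(i). Since here $\text{Var}(X_k)=1$ for every $k$, we are in the case $\sigma=1$, so Theorem \ref{th2}(i) reduces to the scalar proportionality
\[
GC_\nu(\boldsymbol{X}) = \text{Cov}(Z,\Phi(Z))\,\text{Var}(\boldsymbol{X}),
\]
valid for every $\nu>1$; that is, $GC_\nu(\boldsymbol{X})$ is nothing but a positive scalar multiple of the ordinary variance--covariance matrix, with constant $c:=\text{Cov}(Z,\Phi(Z))$. Every part then follows from the elementary way trace, eigenvalues and determinant behave under scalar multiplication, so the proof is essentially bookkeeping around this one identity.

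First I would dispatch (i) and (iii). For (i), apply the trace to the identity and pull the scalar out by linearity, giving $\text{Tr}[GC_\nu(\boldsymbol{X})] = c\,\text{Tr}[\text{Var}(\boldsymbol{X})]$, which is the claim. For (iii), use the homogeneity of the determinant of a $K\times K$ matrix, $|cA|=c^{K}|A|$, to obtain $|GC_\nu(\boldsymbol{X})| = c^{K}|\text{Var}(\boldsymbol{X})| = \text{Cov}^{K}(Z,\Phi(Z))\,|\text{Var}(\boldsymbol{X})|$.

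For (ii) the only subtlety is the factor of two hidden in the definition of $\mu_k$: recall that $2\mu_k$ is defined as an eigenvalue of the symmetrized matrix $GC_\nu(\boldsymbol{X})+GC_\nu(\boldsymbol{X})^{\top}$. In the Gaussian case, however, the proportionality above shows $GC_\nu(\boldsymbol{X})$ inherits the symmetry of $\text{Var}(\boldsymbol{X})$, so $GC_\nu(\boldsymbol{X})+GC_\nu(\boldsymbol{X})^{\top} = 2\,GC_\nu(\boldsymbol{X}) = 2c\,\text{Var}(\boldsymbol{X})$. Its eigenvalues are therefore $2c\gamma_k$, whence $2\mu_k = 2c\gamma_k$ and $\mu_k = c\gamma_k = \text{Cov}(Z,\Phi(Z))\,\gamma_k$. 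This collapse of the symmetrization, which fails in the general non-Gaussian setting, is exactly what makes the factor of two cancel cleanly and is the one place deserving attention.

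Finally (iv) is obtained by dividing (ii) by (i): $\mu_k/\text{Tr}[GC_\nu(\boldsymbol{X})] = c\gamma_k/\big(c\,\text{Tr}[\text{Var}(\boldsymbol{X})]\big) = \gamma_k/\text{Tr}[\text{Var}(\boldsymbol{X})]$, the common factor $c$ cancelling. The substantive remark here is that this cancellation (and hence the $\nu$-independence of the ratio) is legitimate precisely because $c=\text{Cov}(Z,\Phi(Z))>0$: both $Z$ and $\Phi(Z)$ are strictly increasing functions of $Z$ and are thus positively correlated, so $c$ is a nonzero scalar and the ratios are well defined. Overall there is no real obstacle beyond tracking the factor of two in (ii), which the symmetry of the Gaussian covariance resolves at once.
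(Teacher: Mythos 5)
Your proof is correct and follows essentially the same route as the paper's: both reduce everything to the scalar proportionality $GC_\nu(\boldsymbol{X}) = \mathrm{Cov}(Z,\Phi(Z))\,\mathrm{Var}(\boldsymbol{X})$ from Theorem \ref{th2}(i) with $\sigma=1$, after which trace, determinant and eigenvalue bookkeeping give the four claims. The paper simply states that "the results follow directly," so your explicit handling of the symmetrization and the factor of two in (ii) is a welcome elaboration rather than a departure.
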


\begin{proof}
From Theorem \ref{th2}: 
$$GC_\nu(\boldsymbol{X})=\sigma^{-1} \text{Cov}(Z,\Phi(Z)) \emph{Var}(\boldsymbol{X}).$$
From \cite{Abramowitz} (Chapter 26), when $Z \sim \mathcal{N}(0,1)$, 
$$
\text{Cov}(Z,\Phi(Z))=\frac{1}{2\sqrt{\pi}}\approx 0.2821.
$$
Then the results follow directly. 
\end{proof}

Point (iv) shows that the eigenvalues of the standard PCA are proportional to those issued from the generalized Gini PCA. Because each eigenvalue (in proportion of the trace) represents the variability (or the quantity of information) inherent to each axis, then both PCA techniques are equivalent when $\X$ is Gaussian:
$$
\frac{\mu_k}{\text{Tr}[GC_\nu(\X)]} = \frac{\gamma_k}{\text{Tr}[\text{Var}(\X)]}, \ \forall k=1,\ldots,K \ ; \ \forall \nu > 1.
$$

\subsection{The $\R^{N}$-Euclidean space}

In classical PCA, the duality between $\R^{N}$ and $\R^{K}$ enables the eigenvectors and eigenvalues of $\R^{N}$ to be deduced from those of $\R^{K}$ and conversely. This duality is not so obvious in the Gini PCA case. Indeed, in $\R^{N}$ the Gini variability between the observations would be measured by $GC_\nu(\widetilde{\Z}):=\frac{-2\nu}{N(N-1)}(\rr_\z^{c})^{\top}\Z$, and subsequently the idea would be to derive the eigenvalue equation related to $\R^N$,
\[
[GC_\nu(\widetilde{\Z})+GC_\nu(\widetilde{\Z})^{\top}]\ \widetilde{\bb}_{\cdot k} = \widetilde{\lambda}_{\cdot k}\widetilde{\bb}_{\cdot k}.
\]   
The other option is to define a basis of $\R^N$ from a basis already available in $\R^K$. In particular, the set of principal components $\{\f_1,\ldots,\f_{\cdot k}\}$ provides by construction a set of normalized and orthogonal vectors. Let us rescale the vectors $\f_{\cdot k}$ such that:
\[
\widetilde{\f}_{\cdot k} = \frac{\f_{\cdot k}}{GMD_\nu(\f_{\cdot k},\f_{\cdot k})}.
\]
Then, $\{\widetilde{\f}_1,\ldots,\widetilde{\f}_{\cdot k}\}$ constitutes an orthonormal basis of $\R^K$ in the Gini sense since $GMD_\nu(\widetilde{\f}_{\cdot k},\widetilde{\f}_{\cdot k})=1$. This basis may be used as a projector of the variables $\z_{\cdot k}$ onto $\R^N$. Let $\widetilde{\F}$ be the $N\times K$ matrix with $\widetilde{\f}_{\cdot k}$ in columns. The projection of the variables $\z_{\cdot k}$ in $\R^N$ is given by the following Gini correlation matrix:
\[
\V := \frac{-2\nu}{N(N-1)}\ \widetilde{\F}^{\top}\rr_\z^c,
\]
whereas it is given by $\frac{1}{N}\widetilde{\F}^{\top}\Z$ in the standard PCA, that is, the matrix of Pearson correlation coefficients between all $\widetilde{\f}_{\cdot k}$ and $\z_{\cdot \ell}$. The same interpretation is available in the Gini case. The matrix $\V$ is normalized in such a way that $\V \equiv [v_{k\ell}]$ are the $G$-correlations indices between $\widetilde{\f}_{\cdot k}$ and $\z_{\cdot \ell}$. This yields the ability to make easier the interpretation of the variables projected onto the new subspace.

\section{Interpretations of the Gini PCA}\label{s5}

The analysis of the projections of the observations and of the variables are necessary to provide accurate interpretations. Some criteria have to be designed in order to bring out, in the new subspace, the most significant observations and variables. 

\subsection{Observations}
 
The absolute contribution of an observation $i$ to the variability of a principal component $\f_{\cdot k}$ is:
\[
ACT_{ik} = \frac{f_{ik}\ \Psi(R(f_{ik}))}{GGMD_\nu(\f_{\cdot k},\f_{\cdot k})}.
\]
The absolute contribution of each observation $i$ to the generalized Gini Mean Difference of $\f_{\cdot k}$ ($ACT_{ik}$) is interpreted as a percentage of variability of $GGMD_\nu(\f_{\cdot k},\f_{\cdot k})$, such that $\sum_{i=1}^N ACT_{ik}=1$. This provides the most important observations $i$ related to component $\f_{\cdot k}$ with respect to the information $GGMD_\nu(\f_{\cdot k},\f_{\cdot k})$. On the other hand, instead of employing the Euclidean distance between one observation $i$ and the component $\f_{\cdot k}$, the Manhattan distance is used. The relative contribution of an observation $i$ to component $\f_{\cdot k}$ is then:
\[
RCT_{ik} = \frac{\left|f_{ik}\right |}{ \| \f_{i\cdot}\|_1}.
\]
Remark that the gravity center of $\{\f_1,\ldots,\f_{\cdot K}\}$ is $\mathbf{g}:=(\bar{\f}_1,\ldots,\bar{\f}_{\cdot K}) =\mathbf{0}$. The Manhattan distance between observation $i$ and $\mathbf{g}$ is then $\sum_{k=1}^K\left|f_{ik} - 0\right |$, and so 
\[
RCT_{ik} = \frac{\left|f_{ik}\right |}{\| \f_{i\cdot} - \mathbf g\|_1}.
\]
The relative contribution $RCT_{ik}$ may be interpreted rather as the contribution of dimension $k$ to the overall distance between observation $i$ and $\mathbf{g}$.

\subsection{Variables}

The most significant variables must be retained for the analysis and the interpretation of the data in the new subspace. It would be possible, in the same manner as in the observations case, to compute absolute and relative contributions from the Gini correlation matrix $\V \equiv [v_{k\ell}]$. Instead, it is possible to test directly for the significance of the elements $v_{k\ell}$ of $\V$ in order to capture the variables that significantly contribute to the Gini variability of components $\f_{\cdot k}$. Let us denote $\widetilde{U}_{\ell k}:=\cov(\f_{\cdot \ell},\rr_{\z_{\cdot k}}^c)$ with $\rr_{\z_{\cdot k}}^c$ the (decumulative) centered rank vector of $\z_{\cdot k}$ raised to an exponent of $\nu-1$ and $U_{\cdot \ell} :=\cov(\f_{\cdot \ell},\rr^c_{\f_{\cdot \ell}})$. Those two Gini covariances yield the following $U$-statistics:     
\begin{equation} \notag
U_{\ell k} = \frac{\widetilde{U}_{\ell k}}{U_{\cdot \ell}} = v_{k\ell}.
\end{equation}
Let $U_{\ell k}^0$ be the expectation of $U_{\ell k}$, that is $U_{\ell k}^0:=\E[U_{\ell k}]$. From \cite{Yitzhaki13}, $U_{\ell k}$ is an unbiased and consistent estimator of $U_{\ell k}^0$. From Theorem 10.4 in \cite{Yitzhaki13}, Chapter 10, we asymptotically get that $\sqrt{N}(U_{\ell k} - U_{\ell k}^0) \stackrel{a}{\sim} \mathcal{N}$. Then, it is possible to test for:
\begin{equation}\notag
\left\|
\begin{array}{l}
H_0 : U_{\ell k}^0 = 0  \\ 
H_1 : U_{\ell k}^0 \neq 0.
\end{array}
\right.
\end{equation}
Let $\hat{\sigma}^2_{\ell k}$ the Jackknife variance of $U_{\ell k}$, then it is possible to test for the null under the assumption $N \to \infty$ as follows:\footnote{As indicated by \cite{Yitzhaki1990}, the efficient Jackknife method may be used to find the variance of any $U$-statistics.}
\begin{equation} \notag
\frac{U_{\ell k}}{\hat{\sigma}_{\ell k}} \stackrel{a}{\sim} \mathcal{N}(0,1).
\end{equation}
The usual PCA enables the variables to be analyzed in the circle of correlation, which outlines the correlations between the variables $\z_{\cdot k}$ and the components $\f_{\cdot \ell}$. In order to make a comparison with the usual PCA, let us rescale the $U$-statistics $U_{\ell k}$. Let $\mathbf{U}$ be the $K\times K$ matrix such that $\mathbf{U} \equiv [U_{\ell k}]$, and $\mathbf{u}_{\cdot k}$ the $k$-th column of $\mathbf{U}$. Then, the absolute contribution of the variable $\z_{\cdot k}$ to the component $\f_{\cdot \ell}$ is:
\[
\widetilde{ACT}_{k \ell} = \frac{U_{\ell k}}{\| \mathbf{u}_{\cdot k}\|_2}.
\]
The measure $\widetilde{ACT}_{k \ell}$ yields a graphical tool aiming at comparing the standard PCA with the Gini PCA. In the standard PCA, $\cos^2\theta$ (see Figure \ref{circle} below) provides the Pearson correlation coefficient between $\f_1$ and $\z_{\cdot k}$. In the Gini PCA, $\cos^2\theta$ is the normalized Gini correlation coefficient $\widetilde{ACT}_{k 1}$ thanks to the $\ell_2$ norm. 

\begin{figure}[H] 
\centering\includegraphics[scale=0.7]{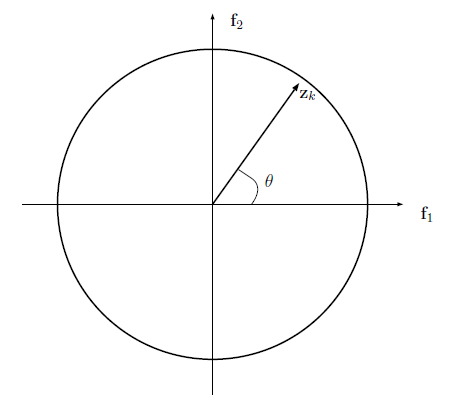} 
\caption{Circle of correlation}\label{circle}
\end{figure}

It is worth mentioning that the circle of correlation does not provide the significance of the variables. This significance relies on the statistical test based on the $U$-statistics exposed before. Because $\widetilde{ACT}$ depends on the $\ell_2$ metric, it is sensitive to outliers, and as such, the choice of the variables must rely on the test of $U_{\ell k}^0$ only. 


\section{Monte Carlo Simulations}\label{s6}

In this Section, it is shown with the aid of Monte Carlo simulations that the usual PCA yields irrelevant results when outlying observations contaminate the data. To be precise, the absolute contributions computed in the standard PCA based on the variance may lead to select outlying observations on the first component in which there is the most important variability (a direct implication of the maximization of the variance). In consequence, the interpretation of the PCA may inflate the role of the first principal components. The Gini PCA dilutes the importance of the outliers to make the interpretations more robust and therefore more relevant.    

\medskip
 
\begin{algorithm}[H]
\KwResult{Robust Gini PCA with data contamination}
 $\theta=1$ $[$ $\theta$ is the value of the outlier$]$  \;
 \Repeat{ $\theta = 1000$ \emph{[increment of $1$]}}{
Generate a $4$-variate normal distribution $\X \sim \mathcal{N}$, $N=500$ \;
Introduce outliers in 1 row of $\X$: $\X^o_{ji}: = \theta \X_{ji}$ with $j=1,\ldots,4$ [for a random row localization]\;
For each method (Variance and Gini), the $ACT$ and $RCT$ are computed for the axes 1 and 2 on the contaminated matrix $\X^o$\;
}
\Return Mean squared Errors of eigenvalues, $ACT$ and $RCT$ \;
\caption{Monte Carlo Simulation}\label{MC}
\end{algorithm}
\bigskip
The mean squared errors of the eigenvalues are computed as follows:
\begin{equation}\notag
MSE_{\lambda_k} = \frac{\sum_{i=1}^{1,000}(\lambda_k^{oi} - \lambda_k)^2}{1,000},
\end{equation} 
where $\lambda_k^{oi}$ is the eigenvalue computed with outlying observations in the sample. The MSE of $ACT$ et $RCT$ are computed in the same manner.

\medskip

We first investigate the case where the variables are highly correlated in order to gauge the robustness of each technique (Gini for $\nu=2,4,6$ and variance). The correlation matrix between the variables is given by:

\begin{equation}\notag
\boldsymbol{\rho} = \left( \begin{array}{cccc}
1 & 0.8 & 0.9 & 0.7 \\
0.8 & 1 & 0.8 & 0.75 \\
0.9 & 0.8 & 1 & 0.6  \\
0.7 & 0.75 & 0.6 & 1  \\
 \end{array}\right)
\end{equation}

\bigskip

As can be seen in the matrix above, we can expect that all the information be gathered on the first axis because each pair of variables records an important linear correlation. The repartition of the information on each component, that is, each eigenvalue in percentage of  the sum of the eigenvalues is the following. 

\begin{table}[h!]
\begin{tabular}{| c | c || c | c | c | c |}
\hline \multicolumn{2}{|c||} {Eigenvalues}  & Gini $\nu$ = 2 &  Gini $\nu$ = 4 & Gini $\nu$ = 6 & Variance  \\
\hline
\hline  
            & eigenvalues (\%) & 81.65341 & 82.31098  & 82.28372 & 81.11458 \\
 \cline{3-6}   Axis 1 & &   &  &   &  \\
           & \textbf{
					 MSE} & \textbf{ 12.313750} & \textbf{12.196975} & \textbf{12.221840} & \cellcolor{lightgray} \textbf{15.972710}  \\
\hline 
\hline
           & eigenvalues (\%) & 10.90079 & 10.47317 & 10.46846 & 11.35471 \\
 \cline{3-6} Axis 2 & &   &  &   &  \\
          & \textbf{MSE} & \cellcolor{lightgray} \textbf{ 11.478541}  & \textbf{11.204504} & \textbf{10.818344}  &  \textbf{10.688924} \\
\hline 
\hline 
           & eigenvalues (\%) & 5.062329 & 4.996538  & 5.088865 & 5.112817 \\
\cline{3-6} Axis 3 &  &   &  &   &  \\
           & \textbf{MSE} & \textbf{ 2.605312} &  \textbf{2.608647} & \textbf{2.799180} & \cellcolor{lightgray} \textbf{4.687323} \\
\hline
\hline 
            & eigenvalues (\%) & 2.383476 & 2.219311 & 2.15896 & 2.417897 \\
\cline{3-6} Axis 4 &   &   &  &   &  \\
           & \textbf{MSE} & \textbf{ 1.541453}  & \textbf{1.826055} & \cellcolor{lightgray} \textbf{3.068596} & \textbf{2.295100} \\
\hline 
\end{tabular}
\caption{Eigenvalues and their MSE} 
 \end{table}

The first axis captures around 82\% of the variability of the overall sample (before contamination). Although each PCA method yields the same repartition of the information over the different components before the contamination of the data, it is possible to show that the classical PCA is not robust. For this purpose, let us analyze Figures \ref{fig:3a}-\ref{fig:3d} below that depict the MSE of each observation with respect to the contamination process described in Algorithm 1 above. 

\begin{figure}[h!]
\begin{subfigure}{.6\textwidth}
\includegraphics[width=1\linewidth]{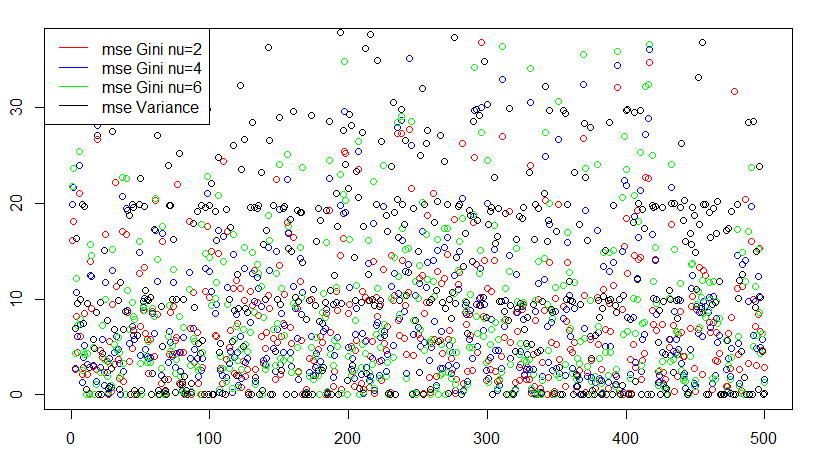} 
\caption{$ACT_1$}\label{fig:3a}
\end{subfigure}%
\begin{subfigure}{.6\textwidth}
\includegraphics[width=1\linewidth]{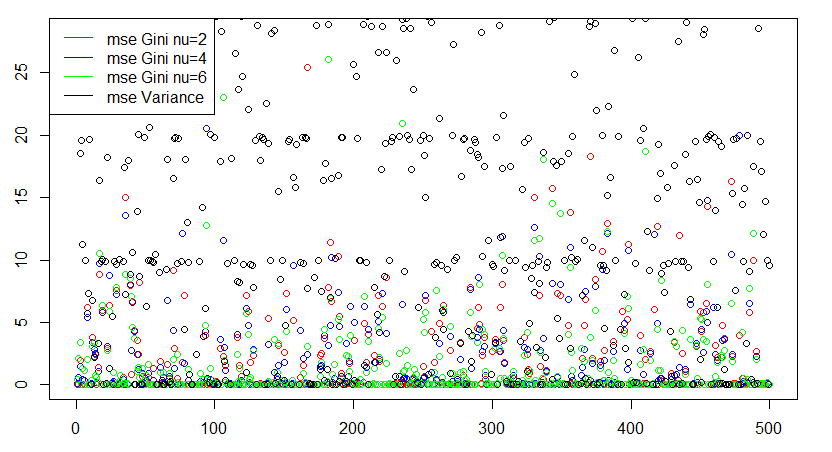} 
\caption{$ACT_2$}\label{fig:3b}
\end{subfigure}%

\begin{subfigure}{.6\textwidth}
\includegraphics[width=1\linewidth]{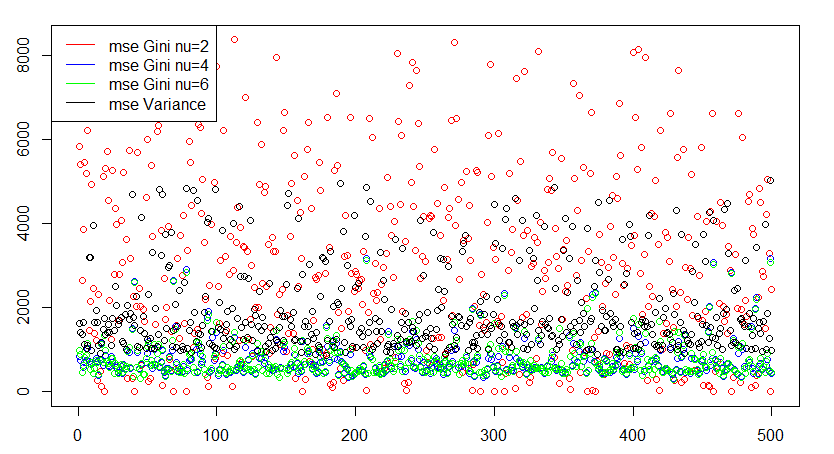} 
\caption{$RCT_1$}\label{fig:3c}
\end{subfigure}%
\begin{subfigure}{.6\textwidth}
\includegraphics[width=1\linewidth]{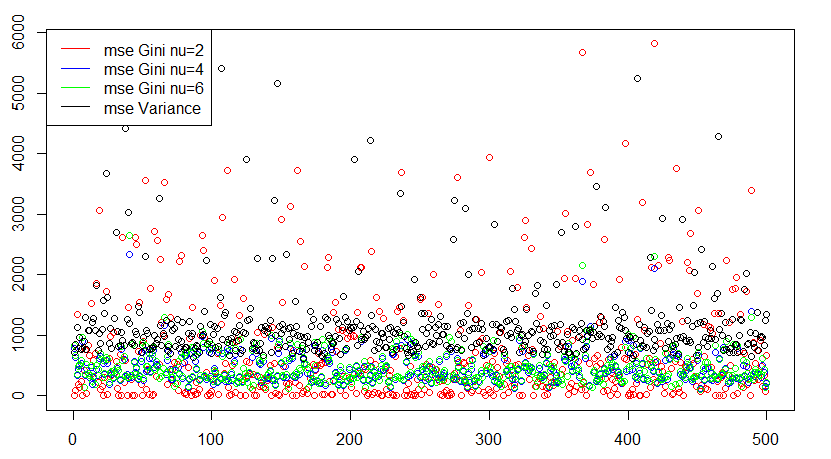} 
\caption{$RCT_2$}\label{fig:3d}
\end{subfigure}%
\caption{$ACT_1$, $ACT_2$, $RCT_1$ and $RCT_2$}
\end{figure}

On the first axis of Figure \ref{fig:3a}, the absolute contribution of each observation (among 500 observations) is not stable because of the contamination of the data, however the Gini PCA performs better. The MSE of the ACTs measured during to the contamination process provides lower values for the Gini index compared with the variance. On the other hand, if we compute the standard deviation of all these MSEs over the two first axis, again the Gini methodology provides lower variations (see Table \ref{tab:2}). \break

\begin{table}[h!]
\begin{center}
\begin{tabular}{|c||cccc|}
\hline
&Gini $\nu=2$ &  Gini $\nu=4$ &  Gini $\nu=6$ & Variance \\ \hline\hline
Axis 1&6.08 & 6.62 & 7.41 & \textbf{12.09} \\
Axis 2&4.07 & 5.12 & 13.37 & 2.98 \\\hline 
\end{tabular}
\caption{Standard deviation of the MSE of the ACTs on the two first axis}\label{tab:2}
\end{center}
\end{table}

Let us take now an example with less correlations between the variables in order to get a more equal repartition of the information on the first two axes. 

\begin{equation}\notag
\mathbf{\rho} = \left( \begin{array}{cccc}
1 & -0.5 & 0.25 & 0.5 \\
-0.5 & 1 & -0.9 & 1 \\
0.25 & -0.9 & 1 & -0.25  \\
0.5 & 0 & -0.25 & 1  \\
 \end{array}\right)
\end{equation}

The repartition of the information over the new axes (percentage of each eigenvalue) is given in Table \ref{Tab3}. When the information is less concentrated on the first axis (55\% on axis 1 and around 35\% on axis 2), the MSE of the eigenvalues after contamination are much more important for the standard PCA compared with the Gini approach (2 to 3 times more important). Although the fourth axis reports an important MSE for the Gini method ($\nu=6$), the eigenvalue percentage is not significant (1.56\%).   

\begin{table}[h!]
\begin{tabular}{| c | c || c | c | c | c |}
\hline \multicolumn{2}{|c||} {eigenvalues (\%)}  & Gini $\nu$ = 2 &  Gini $\nu$ = 4 & Gini $\nu$ = 6 & Variance  \\
\hline
\hline  
            & eigenvalues (\%) & 55.3774 & 55.15931  & 54.96172 & 55.08917 \\
 \cline{3-6}   Axis 1 & &   &  &   &  \\
           & \textbf{MSE} & \textbf{ 17.711023} & \textbf{14.968196} & \textbf{12.745760} & \cellcolor{lightgray} \textbf{38.929147}  \\
\hline 
\hline
           & eigenvalues (\%) & 35.8385 & 35.86216 & 35.8745 & 36.06118 \\
 \cline{3-6} Axis 2 & &   &  &   &  \\
          & \textbf{MSE} & \textbf{ 14.012198}  & \textbf{16.330350} & \textbf{18.929923}  & \cellcolor{lightgray} \textbf{30.948674} \\
\hline 
\hline 
           & eigenvalues (\%) & 7.227274 & 7.345319  & 7.527222 & 7.329535 \\
\cline{3-6} Axis 3 &  &   &  &   &  \\
           & \textbf{MSE} & \textbf{ 4.919686} &  \textbf{4.897820} & \textbf{5.036241} & \cellcolor{lightgray} \textbf{6.814252} \\
\hline
\hline 
            & eigenvalues (\%) & 1.556831 & 1.633214 & 1.636561 & 1.520114 \\
\cline{3-6} Axis 4 &   &   &  &   &  \\
           & \textbf{MSE} & \textbf{ 1.149770}  & \textbf{7.890184} & \cellcolor{lightgray} \textbf{14.047539} & \textbf{1.438904} \\
\hline 
\end{tabular}
\caption{Eigenvalues and their MSE}\label{Tab3}
\end{table}

Let us now have a look on the MSE of the absolute contributions of each observation ($N=500$) for each PCA technique (\ref{fig:5a}-\ref{fig:5b}). We obtain the same kind of results, with less variability on the second axis. In Figures \ref{fig:5a}-\ref{fig:5b}, it is apparent that the classical PCA based on the $\ell_2$ norm exhibits much more ACT variability (black points). This means that the contamination of the data can lead to the interpretation of some observations as significant (important contribution to the variance of the axis) while they are not
(and vice versa). On the other hand, the MSE of the RCTs after contamination of the data, Figures \ref{fig:5c}-\ref{fig:5d}, are less spread out for the Gini technique for $\nu=4$ and $\nu=6$, however for $\nu=2$ there is more variability of the MSE compared with the variance. This means that the distance from one observation to an axis may not be reliable (although the interpretation of the data rather depends on the ACTs).

\begin{figure}[h!]
\begin{subfigure}{.6\textwidth}
\includegraphics[width=1\linewidth]{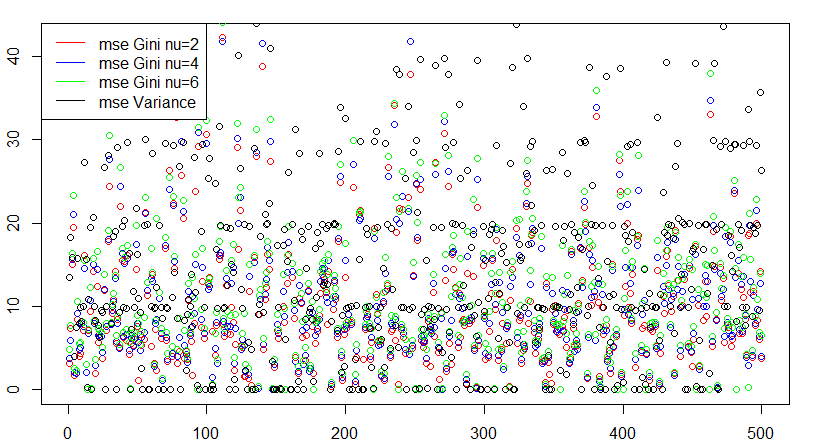} 
\caption{$ACT_1$}\label{fig:5a}
\end{subfigure}%
\begin{subfigure}{.6\textwidth}
\includegraphics[width=.98\linewidth]{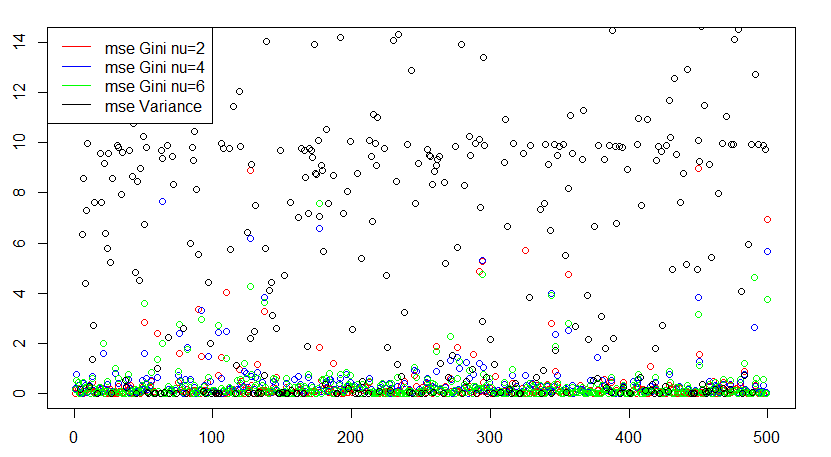} 
\caption{$ACT_2$}\label{fig:5b}
\end{subfigure}%

\begin{subfigure}{.6\textwidth}
\includegraphics[width=1\linewidth]{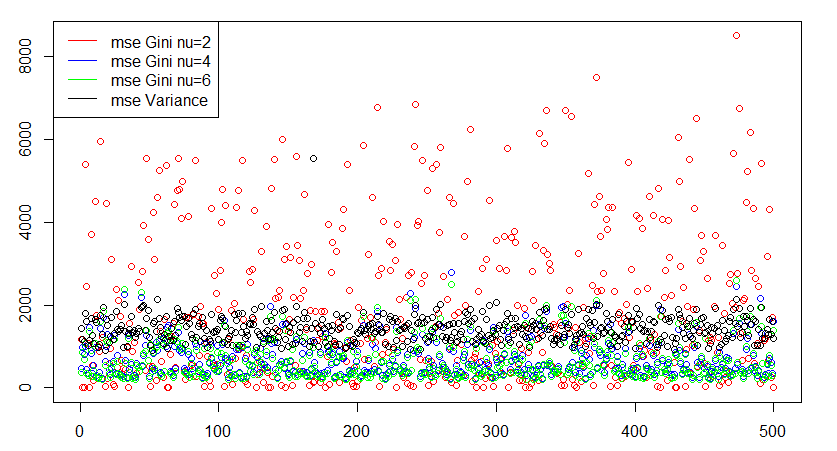} 
\caption{$RCT_1$}\label{fig:5c}
\end{subfigure}%
\begin{subfigure}{.6\textwidth}
\includegraphics[width=1\linewidth]{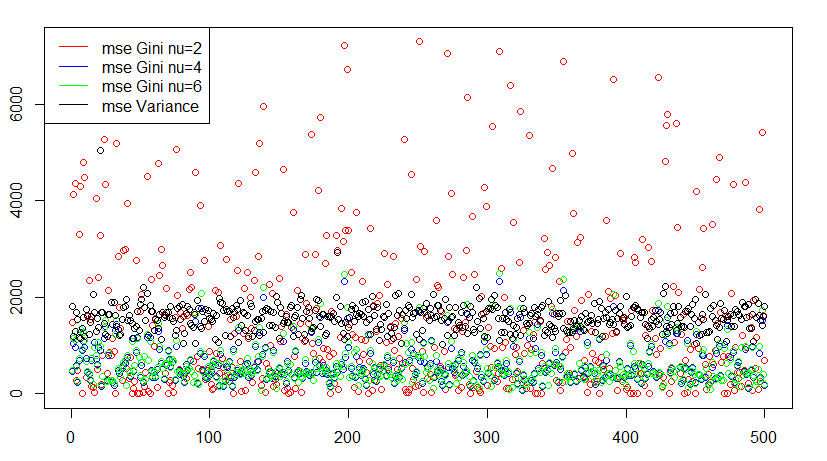} 
\caption{$RCT_2$}\label{fig:5d}
\end{subfigure}%
\caption{$ACT_1$, $ACT_2$, $RCT_1$ and $RCT_2$}
\end{figure}

The results of Figures \ref{fig:5a}-\ref{fig:5d} can be synthesized by measuring the standard deviation of the MSE over the ACTs of the 500 observations along the two first axes. 

\begin{table}[h!]
\begin{center}
\begin{tabular}{|c||cccc|}
\hline
&Gini $\nu=2$ &  Gini $\nu=4$ &  Gini $\nu=6$ & Variance \\\hline\hline
Axis 1&7.50 & 7.86 & 8.46 & \textbf{11.92} \\
Axis 2&0.62& 0.75 & 0.77 & \textbf{11.24} \\\hline 
\end{tabular}
\caption{Standard deviation of the MSE of the ACT on the two first axes}\label{tab2}
\end{center}
\end{table}

As in the previous example of simulation, Table \ref{tab2} indicates that the PCA based on the variance is less stable about the values of the ACTs that provide the most important observations of the sample. This may lead to irrelevant interpretations.

\section{Application on cars data}\label{s7}

We propose a simple application with the celebrated cars data (see the Appendix).\footnote{An R markdown for Gini PCA is available: \scriptsize{\texttt{\textcolor{blue}{https://github.com/freakonometrics/GiniACP/}}}
\\ Data from Michel Tenenhaus's website (see also the Appendix):\\
\scriptsize{\texttt{\textcolor{blue}{https://studies2.hec.fr/jahia/webdav/site/hec/shared/sites/tenenhaus/acces\_anonyme/home/fichier\_excel/auto\_2004.xls}}} } The dataset is particularly interesting since there are highly correlated variables as can be seen in the Pearson correlation matrix given in Table \ref{tab:corr}.

\begin{table}[h!]
\begin{tabular}{| c || c | c | c | c | c | c |}
\hline   & \textbf{capacity} $x_1$ & \textbf{power} $x_2$ &  \textbf{speed} $x_3$ &  \textbf{weight} $x_4$ & \textbf{width} $x_5$ & \textbf{length} $x_6$ \\\hline
\hline $x_1$ & \textbf{1.000} & 0.954 & 0.885 & 0.692 & 0.706 & 0.663\\
\hline $x_2$ & 0.954 &\textbf{1.000} & 0.933 & 0.528 & 0.729 & 0.663 \\
\hline $x_3$ & 0.885 & 0.933& \textbf{1.000} & 0.466 & 0.618 & 0.578 \\
\hline $x_4$ & 0.692 & 0.528 & 0.466 & \textbf{1.000} & 0.477 & 0.794 \\
\hline $x_5$  & 0.706 & 0.729 & 0.618 & 0.477 & \textbf{1.000} & 0.591 \\
\hline $x_6$ & 0.663 & 0.663 & 0.578 & 0.794 & 0.591 & \textbf{1.000} \\
\hline 
\end{tabular}
\caption{Correlation matrix}\label{tab:corr}
\end{table}

Also, the dataset is composed of some outlying observations (Figure \ref{fig5}): Ferrari enzo ($x_1$, $x_2$, $x_5$), Bentley continental ($x_2$), Aston Martin ($x_2$), Land Rover discovery ($x_5$), Mercedes class S ($x_5$), Smart ($x_5, x_6$).

\begin{figure}[h!]
\begin{tabular}{ccc}
capacity $x_1$  &  power $x_2$ &  speed $x_3$ \\
\includegraphics[scale=0.17]{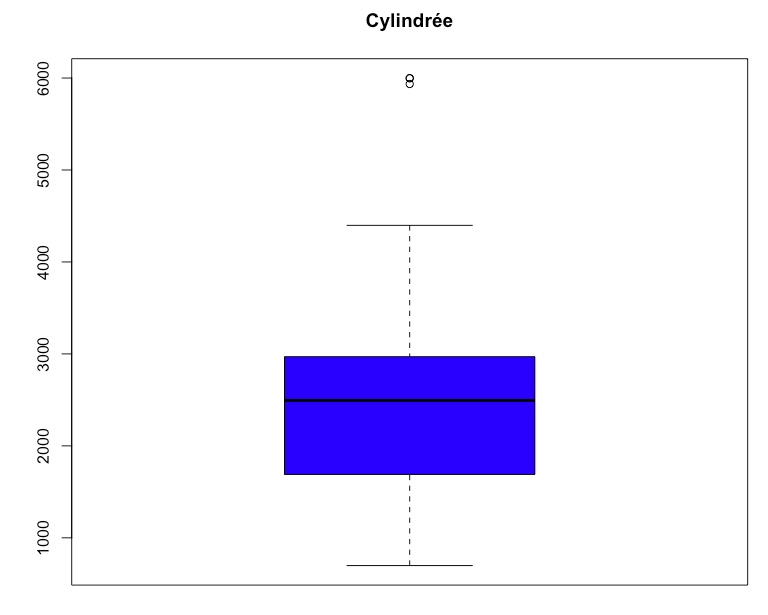}  & \includegraphics[scale=0.17]{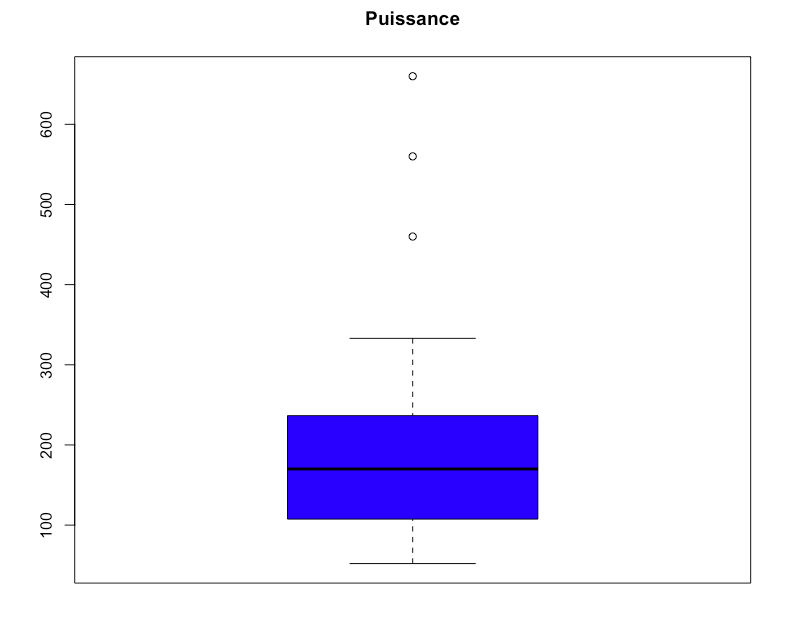} &   \includegraphics[scale=0.17]{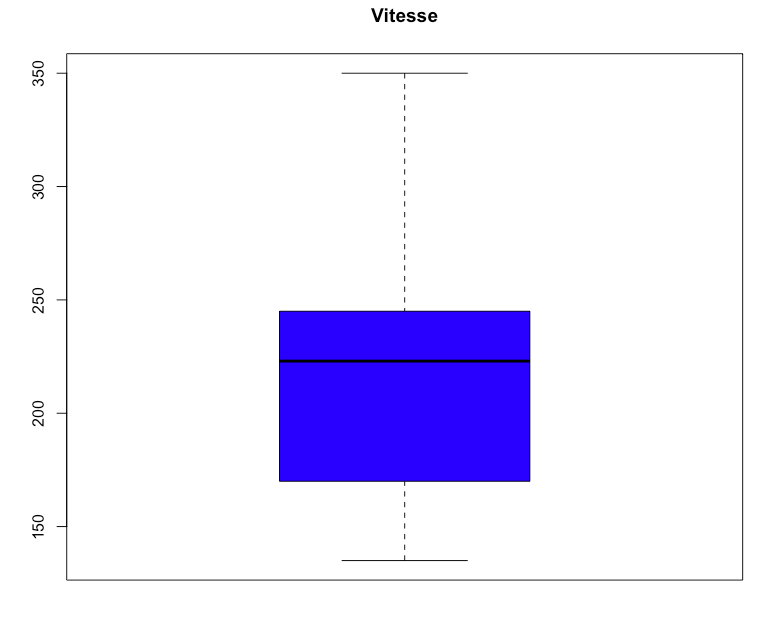} \\
weight $x_4$ &  width $x_5$  &  length $x_6$ \\
\includegraphics[scale=0.17]{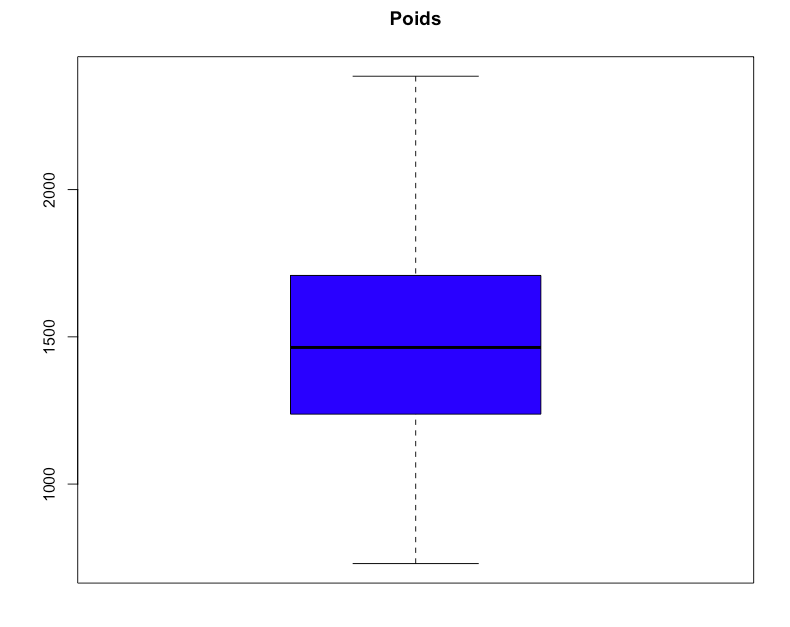} & \includegraphics[scale=0.17]{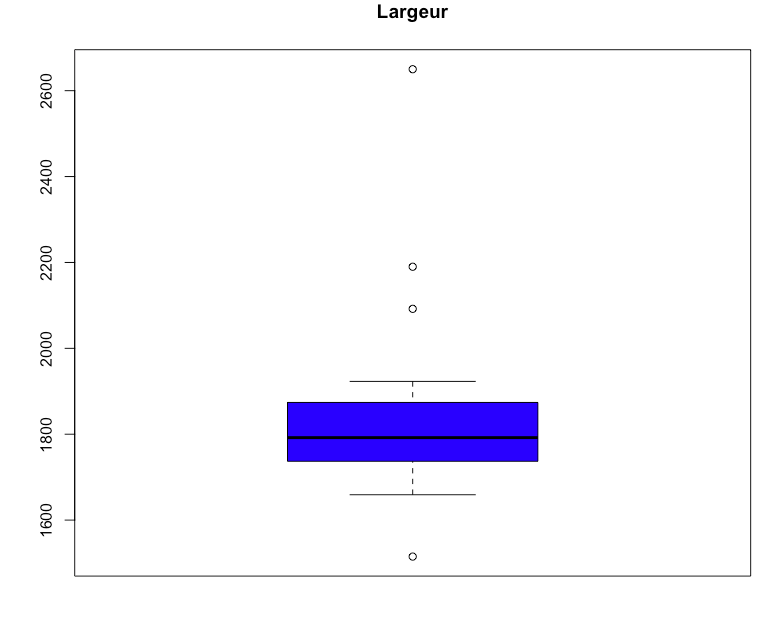}  & \includegraphics[scale=0.17]{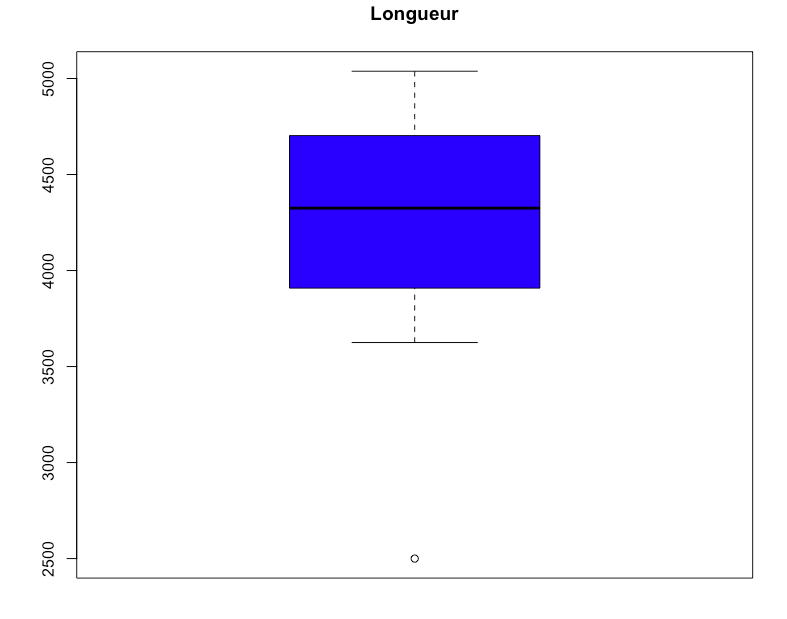}
\end{tabular} 
\caption{Box plots}\label{fig5}
\end{figure}

The overall information (variability) is partitioned over six components (Table \ref{tab6}).

\begin{table}[h!]
\begin{tabular}{| c || c | c | c | c |}
\hline \textbf{eigenvalues in \%}  & Gini $\nu=2$ &  Gini $\nu=4$ & Gini $\nu=6$ & Variance  \\ \hline\hline
\textbf{Axis 1} & \textbf{80.35797} &  \textbf{83.17172}  & \textbf{84.84995} &\textbf{73.52112}\\
\hline \textbf{Axis 2} & \textbf{12.0761}  &  \textbf{10.58655} &  \textbf{9.715974} & \textbf{14.22349}\\ 
\hline  \textbf{Axis 3} & 4.132136 &  2.987015 & 3.130199 & \textbf{7.26106} \\
\hline \textbf{Axis 4}  & 3.059399 & 2.612411  & 1.519626 & 3.93117 \\
\hline  \textbf{Axis 5} & 0.3332362 &  0.3125735  & 0.2696611 & 0.85727 \\
\hline  \textbf{Axis 6} & 0.04115858 &  $- 0.3297257$ & -0.5145944 & 0.20585\\
\hline  \textbf{Sum}  & \textbf{100 \%} &  \textbf{100 \%} & \textbf{100 \%}  & \textbf{100 \%} \\
 \hline
\end{tabular}
\caption{Eigenvalues (\%)}\label{tab6}
\end{table}

Two axes may be chosen to analyze the data. As shown in the previous Section about the simulations, when the data are highly correlated such that two axes are sufficient to project the data, the Gini PCA and the standard PCA yield the same share of information on each axis. However, we can expect some differences for absolute contributions $ACT$ and relative contributions $RCT$. 

The projection of the data is depicted in Figure \ref{fig6}, for each method. 

\begin{figure}[h!]
\begin{subfigure}{.6\textwidth}
\includegraphics[width=1\linewidth]{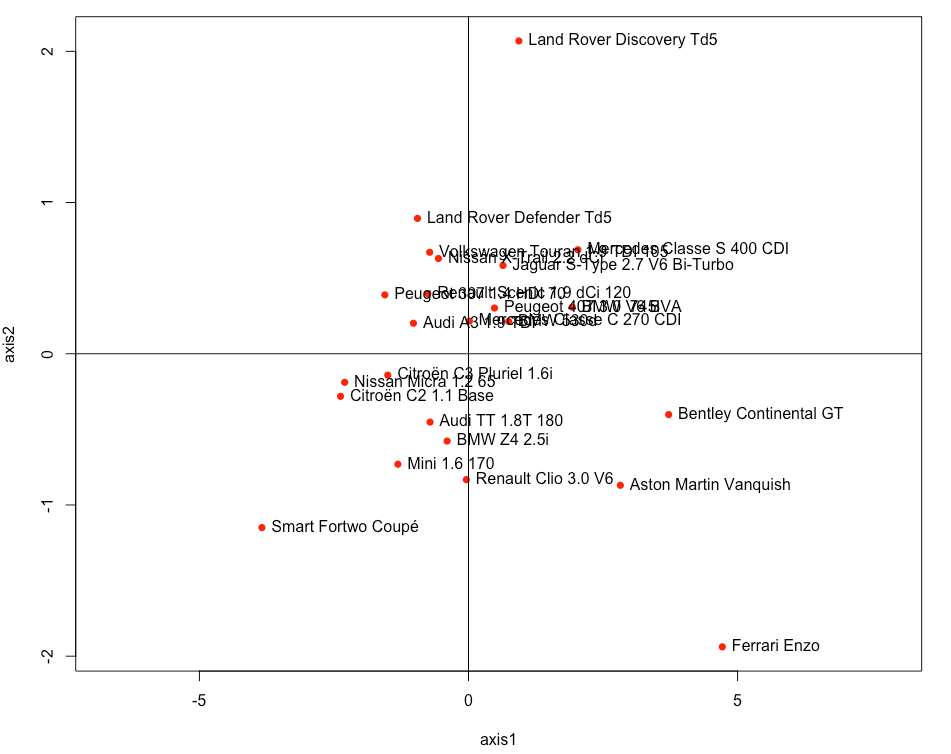} 
\caption{Gini $(\nu = 2)$}
\end{subfigure}%
\begin{subfigure}{.6\textwidth}
\includegraphics[width=1\linewidth]{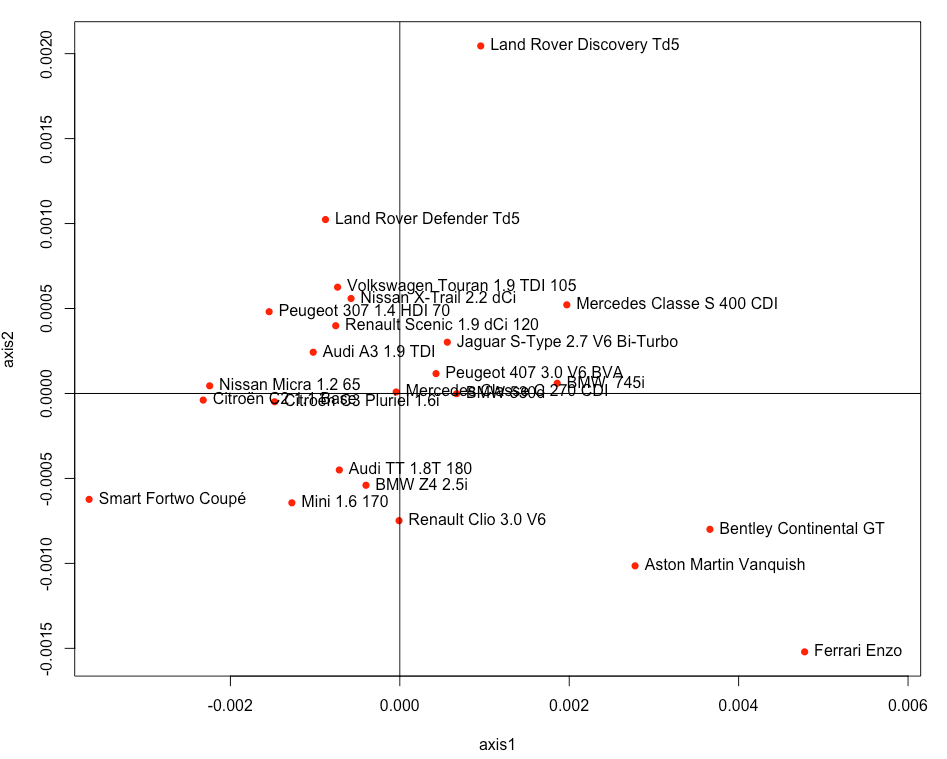} 
\caption{Gini $(\nu = 4)$}
\end{subfigure}%

\begin{subfigure}{.6\textwidth}
\includegraphics[width=1\linewidth]{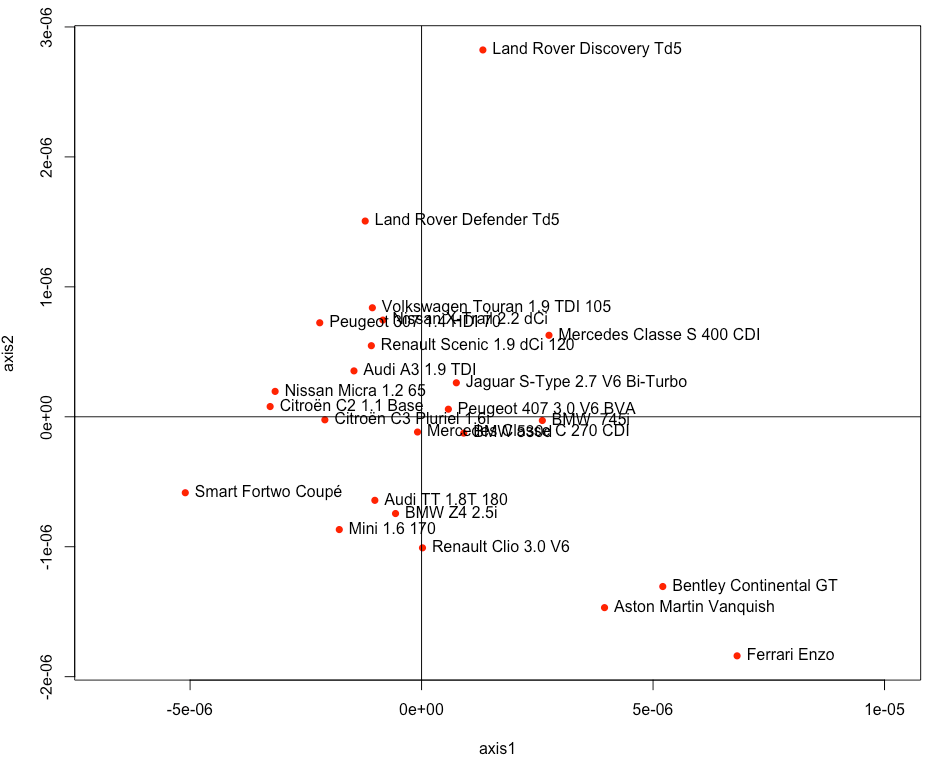} 
\caption{Gini $(\nu = 6)$}
\end{subfigure}%
\begin{subfigure}{.6\textwidth}
\includegraphics[width=1\linewidth]{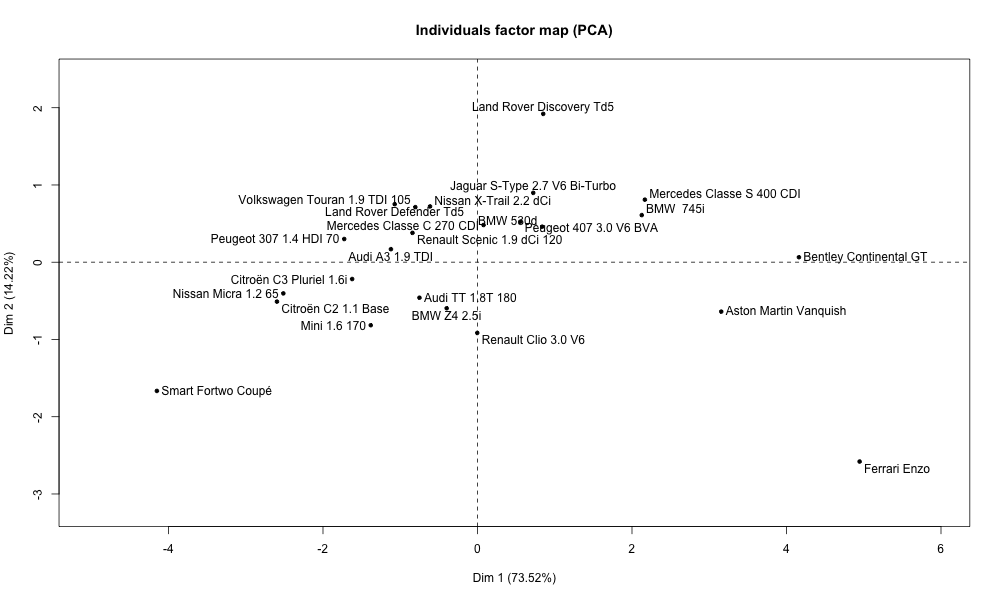} 
\caption{Variance}
\end{subfigure}%
\caption{Projections of the cars}\label{fig6}
\end{figure}

As depicted in Figure \ref{fig6}, the projection is very similar for each technique. The cars with extraordinary (or very low) abilities are in the same relative position in the four projections: Land Rover Discovery Td5 at the top, Ferrari Enzo at the bottom right, Smart Fortwo coup\'e at the bottom left. However, when we improve the coefficient of variability $\nu$ to look for what happens at the tails of the distributions (of the two axes), we see that more cars are distinguishable: Land Rover Defender, Audi TT, BMW Z4, Renault Clio 3.0 V6, Bentley Contiental GT. Consequently, contrary to the case $\nu=2$ or the variance, the projections with $\nu=4,6$ allow one to find other important observations, which are not outlying observations that contribute to the overall amount of variability. For this purpose, let us first analyze the correlations between the variables and the new axes in order to interpret the results, see Tables \ref{tab7} to \ref{tab10}.     

Some slight differences appear between the Gini PCA and the classical one based on the variance. The theoretical Section \ref{s4} indicates that the Gini methodology for $\nu=2$ is equivalent to the variance when the variables are Gaussian. On cars data, we observe this similarity. In each PCA, all variables are correlated with Axis 1 and weight with Axis 2. However, when $\nu$ increases, the Gini methodology allows outlying observations to be diluted so that some variables may appear to be significant, whereas they are not in the variance case. 

\begin{table}[h!]
\begin{tabular}{| c | c || l | l | l | l | l | c |}
\hline \multicolumn{2}{|c||} {\textbf{Gini} ($\nu = 2$)}  & capacity &  power & speed & weight & width & length \\
\hline
\hline  
            & correlation & -0.974 & -0.945  & -0.872 & 0.760 & -0.933 & -0.823 \\
 \cline{3-8}   Axe 1 & &   &  &   &  & & \\
           & \textbf{$U$-stat} & \cellcolor{lightgray} \textbf{ -56.416} & \cellcolor{lightgray} \textbf{-25.005} & \cellcolor{lightgray} \textbf{-10.055} & \cellcolor{lightgray} \textbf{-4.093} & \cellcolor{lightgray} \cellcolor{lightgray}\textbf{-24.837} &\cellcolor{lightgray} \textbf{-12.626} \\
\hline 
\hline 
& correlation & -0.032 & -0.241  & -0.405 & 0.510 & 0.183 & -0.379 \\
 \cline{3-8}   Axe 2 & &   &  &   &  & & \\
           & \textbf{$U$-stat} & \textbf{ -0.112} & \textbf{-0.920} & \textbf{-1.576} & \cellcolor{lightgray} \textbf{2.897}  & \textbf{0.526} & \textbf{1.666} \\
\hline 
\end{tabular}
\caption{Correlations Axes / variables (significance \textcolor{gray}{5\%})}\label{tab7}
\end{table}

\begin{table}[h!]
\begin{tabular}{| c | c || l | l | l | l | l | c |}
\hline \multicolumn{2}{|c||} {\textbf{Gini} ($\nu = 4$)}  & capacity &  power & speed & weight & width & length \\
\hline
\hline  
            & correlation & 0.982 & 0.948  & 0.797 & 0.858 & 0.952 & 0.888 \\
 \cline{3-8}   Axe 1 & &   &  &   &  & & \\
           & \textbf{$U$-stat} & \cellcolor{lightgray} \textbf{ 8.990} & \cellcolor{lightgray} \textbf{8.758} & \cellcolor{lightgray} \textbf{4.805} & \cellcolor{lightgray} \textbf{9.657} & \cellcolor{lightgray} \cellcolor{lightgray}\textbf{8.517} &\cellcolor{lightgray} \textbf{8.182} \\
\hline 
\hline 
& correlation & -0.021 & 0.207  & 0.516 & -0.279 & -0.147 & -0.246 \\
 \cline{3-8}   Axe 2 & &   &  &   &  & & \\
           & \textbf{$U$-stat} & \textbf{ -0.095} & \textbf{0.817} & \cellcolor{lightgray}  \textbf{2.299} &  \textbf{-1.773}  & \textbf{-0.705} & \textbf{-1.200} \\
\hline 
\end{tabular}
\bigskip

\caption{Correlations Axes / variables (significance \textcolor{gray}{5\%})}\label{tab8}
\end{table}

\begin{table}[h!]
\begin{tabular}{| c | c || l | l | l | l | l | c |}
\hline \multicolumn{2}{|c||} {\textbf{Gini} ($\nu = 6$)}  & capacity &  power & speed & weight & width & length \\
\hline
\hline  
            & valeurs & -0.781 & -0.759  & -0.598 & -0.730 & -0.755 & -0.701 \\
 \cline{3-8}   Axe 1 & &   &  &   &  & & \\
           & \textbf{$U$-stat} & \cellcolor{lightgray} \textbf{ -4.036} & \cellcolor{lightgray} \textbf{-3.903} &\cellcolor{lightgray} \textbf{-3.137} & \cellcolor{lightgray} \textbf{-3.125} & \cellcolor{lightgray} \textbf{-3.882} & \cellcolor{lightgray} \cellcolor{lightgray}\textbf{-3.644}  \\
\hline 
\hline 
& valeurs & 0.019 & -0.170  & -0.570 & 0.153 & 0.125 & 0.218 \\
 \cline{3-8}   Axe 2 & &   &  &   &  & & \\
           & \textbf{$U$-stat} & \textbf{ 0.089} & \textbf{-0.734} &  \cellcolor{gray!15}\textbf{-1.914} &  \textbf{0.734}  & \textbf{0.569} & \textbf{0.906} \\
\hline 
\end{tabular}
\bigskip

\caption{Correlations Axes / variables (significance \textcolor{gray}{5\%}, \textcolor{lightgray}{10\%})}\label{tab9}
\end{table}

\begin{table}
\begin{tabular}{| c | c || l | l | l | l | l | c |}
\hline \multicolumn{2}{|c||} {\textbf{Variance}}  & capacity &  power & speed & weight & width & length \\
\hline
\hline  
            & valeurs & 0.962 & 0.923  & 0.886 & 0.756 & 0.801 & 0.795 \\
 \cline{3-8}   Axe 1 & &   &  &   &  & & \\
           & \textbf{$U$-stat} & \cellcolor{lightgray} \textbf{11.802} & \cellcolor{lightgray} \textbf{11.322} &\cellcolor{lightgray} \textbf{10.866} & \cellcolor{lightgray} \textbf{9.282} & \cellcolor{lightgray} \textbf{9.825} & \cellcolor{lightgray} \cellcolor{lightgray}\textbf{9.752}  \\
\hline 
\hline 
& valeurs & -0.126 & -0.352  & -0.338 & 0.575 & -0.111 & 0.504 \\
 \cline{3-8}   Axe 2 & &   &  &   &  & & \\
           & \textbf{$U$-stat} & \textbf{ -0.307} & \textbf{-0.855} &  \textbf{-0.821} &  \cellcolor{gray!15}\textbf{1.396}  & \textbf{-0.269} & \textbf{1.223} \\
\hline 
\end{tabular}

\bigskip
\caption{Correlations Axes / variables (significance \textcolor{gray}{5\%}, \textcolor{lightgray}{10\%})}\label{tab10}
\end{table}

Tables \ref{tab8} and \ref{tab9} ($\nu=4,6$) show that Axis 2 is correlated to speed (not weight as in the variance PCA). In this respect the absolute contributions must describe the cars associated with speed on Axis 2. Indeed, the Land Rover discovery, a heavy weight car, is no more available on Axis 2 for the Gini PCA for $\nu=2,4,6$ (Figures \ref{fig:8}, \ref{fig:9}, \ref{fig:10}). Note that the red line in the Figures represents the mean share of the information on each axis, \emph{i.e.} 100\%/24 cars = 4.16\% of information per car.

\begin{figure}[h!]
\begin{subfigure}{.6\textwidth}
\includegraphics[width=1\linewidth]{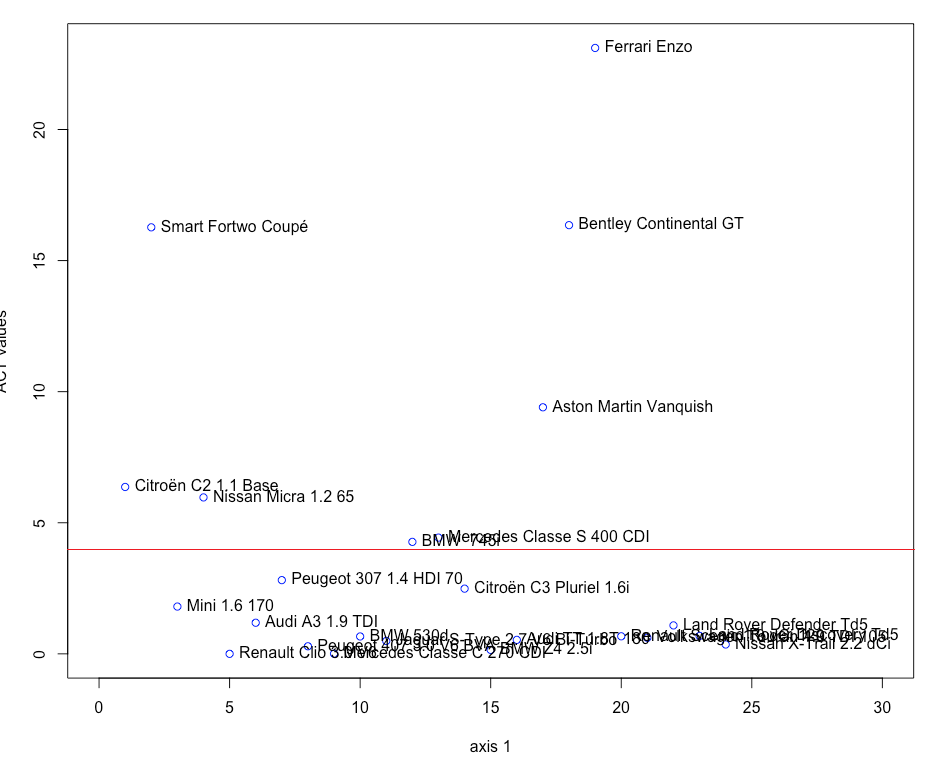} 
\caption{Axis 1 (variance) }
\end{subfigure}%
\begin{subfigure}{.6\textwidth}
\includegraphics[width=1\linewidth]{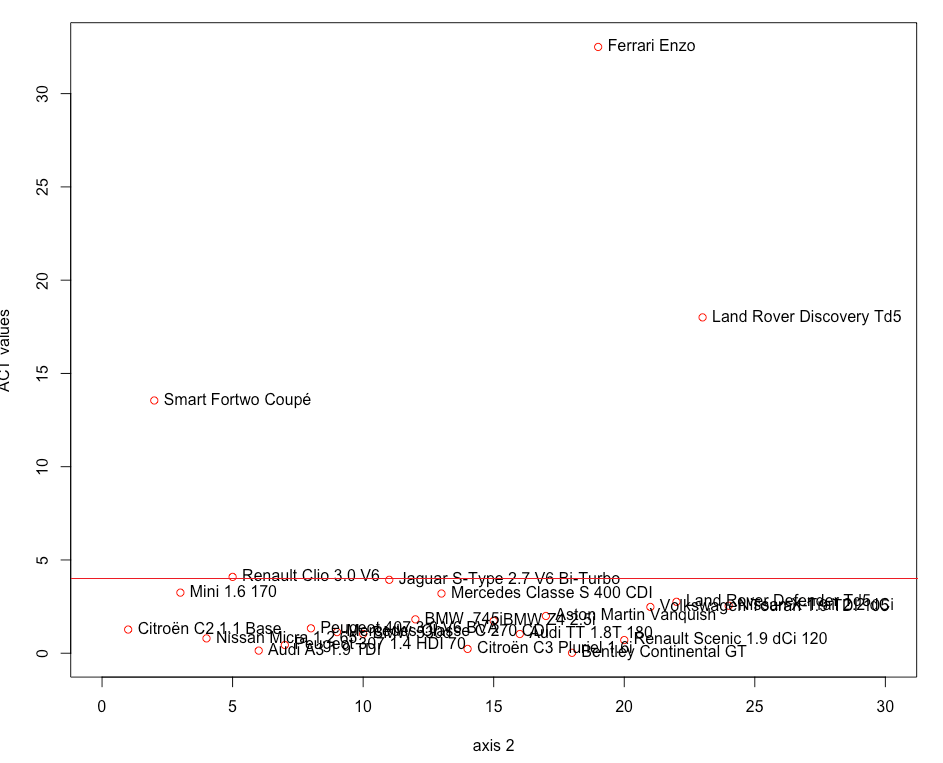} 
\caption{Axis 2 (variance) }
\end{subfigure}%
\caption{Variance ACTs}\label{fig:7}
\end{figure}

\begin{figure}[h!]
\begin{subfigure}{.6\textwidth}
\includegraphics[width=1\linewidth]{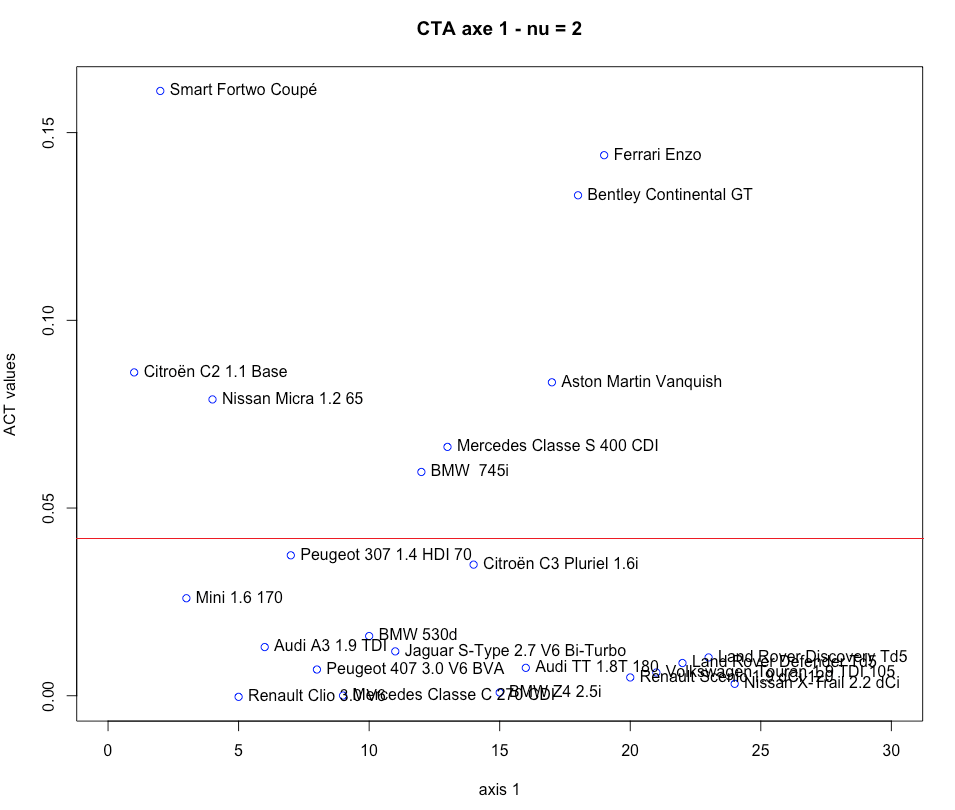} 
\caption{Axis 1 ($\nu=2$) }
\end{subfigure}%
\begin{subfigure}{.6\textwidth}
\includegraphics[width=1\linewidth]{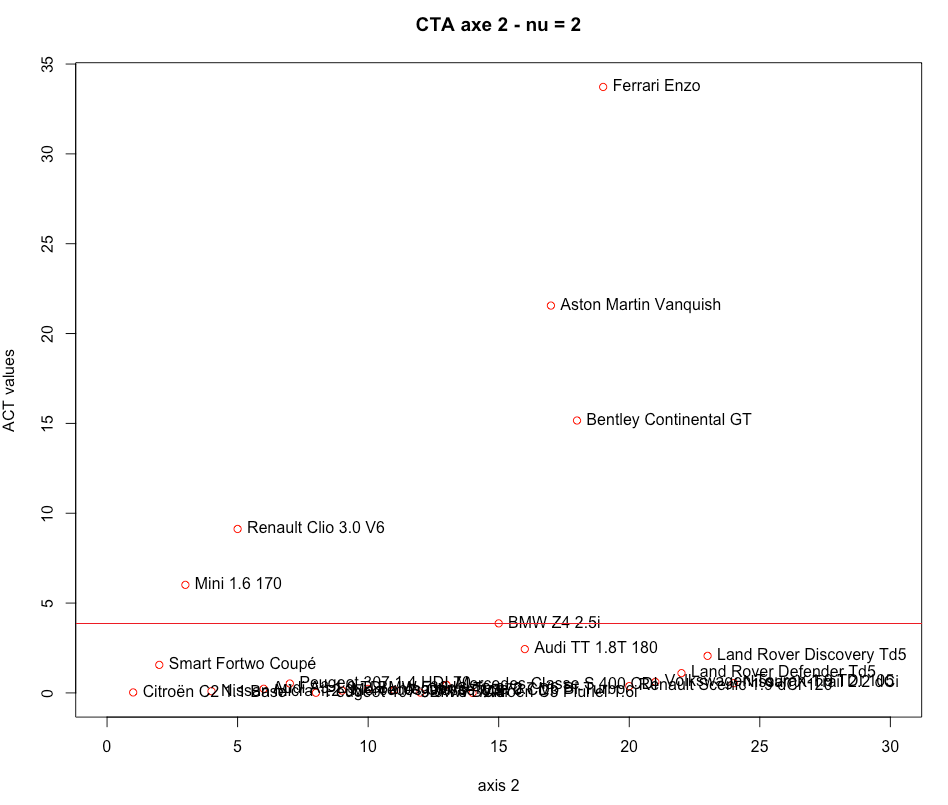} 
\caption{Axis 2 ($\nu=2$) }
\end{subfigure}%
\caption{Gini ACTs ($\nu=2$)}\label{fig:8}
\end{figure}

\begin{figure}[h!]
\begin{subfigure}{.6\textwidth}
\includegraphics[width=1\linewidth]{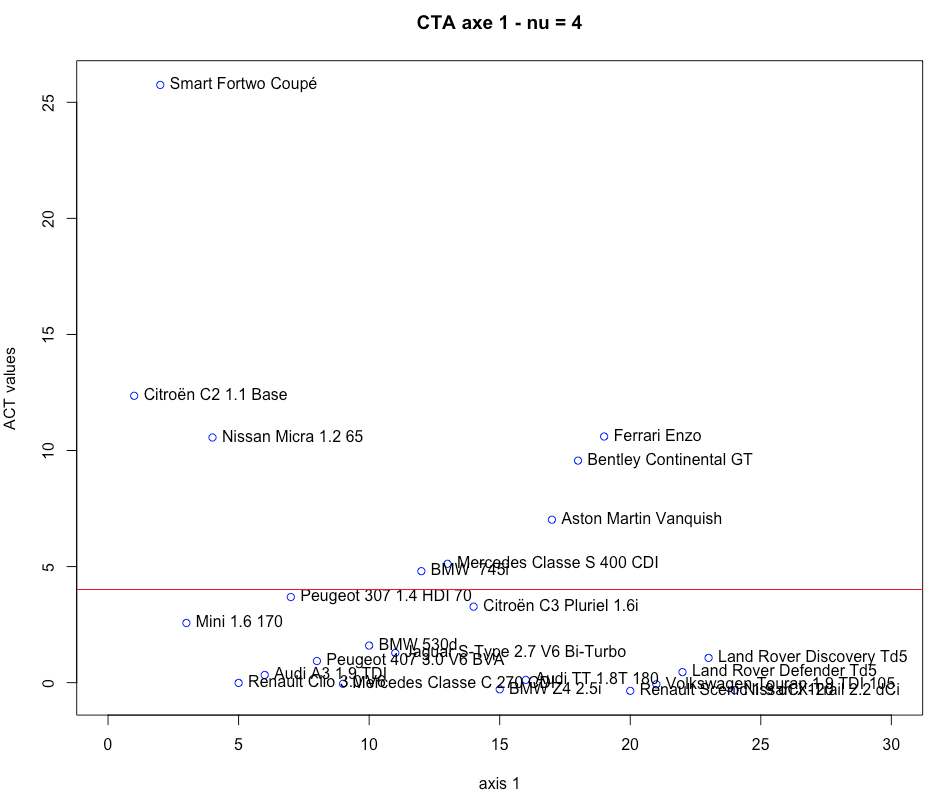} 
\caption{Axis 1 ($\nu=4$) }
\end{subfigure}%
\begin{subfigure}{.6\textwidth}
\includegraphics[width=1\linewidth]{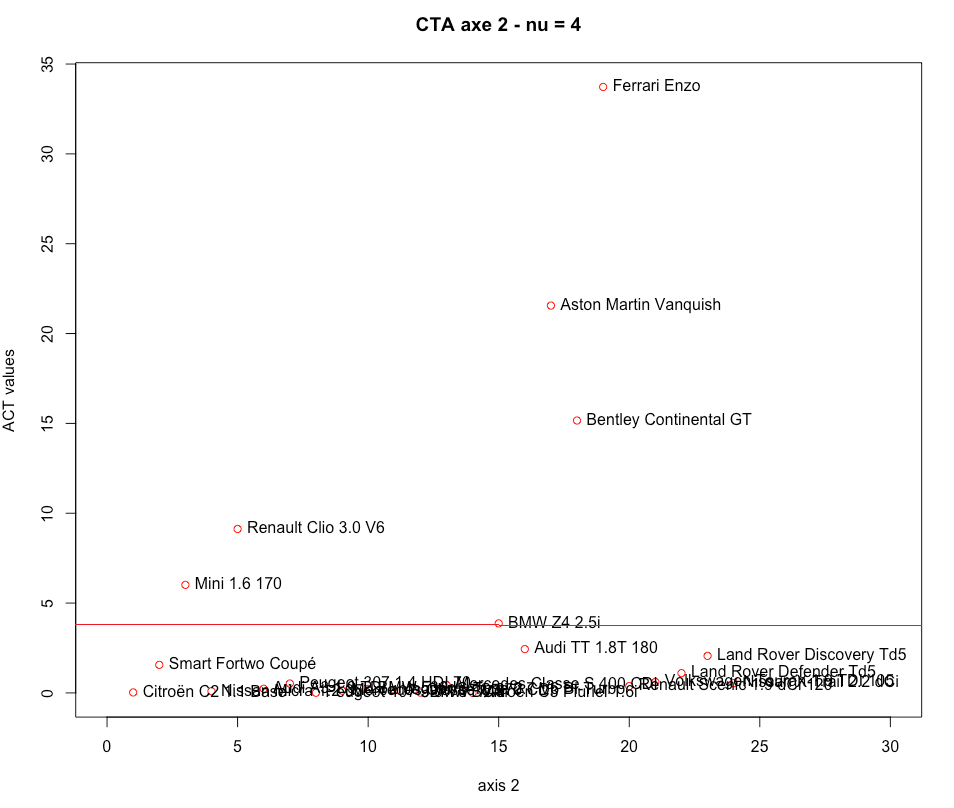} 
\caption{Axis 2 ($\nu=4$) }
\end{subfigure}%
\caption{Gini ACTs ($\nu=4$)}\label{fig:9}
\end{figure}

\begin{figure}[h!]
\begin{subfigure}{.6\textwidth}
\includegraphics[width=1\linewidth]{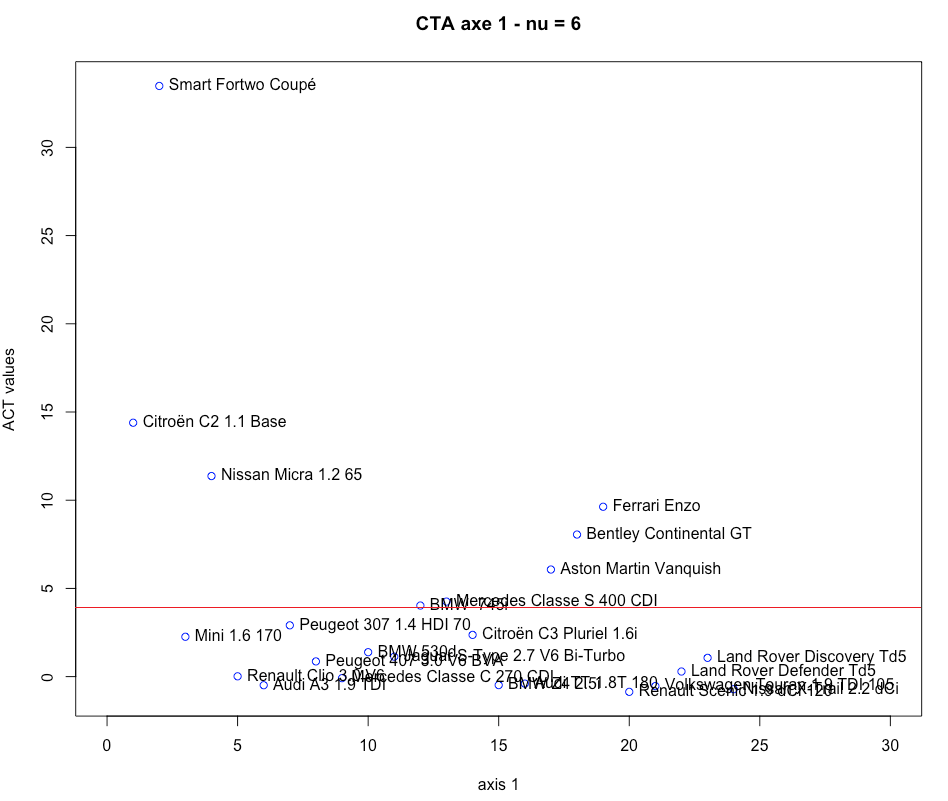} 
\caption{Axis 1 ($\nu=6$) }
\end{subfigure}%
\begin{subfigure}{.6\textwidth}
\includegraphics[width=1\linewidth]{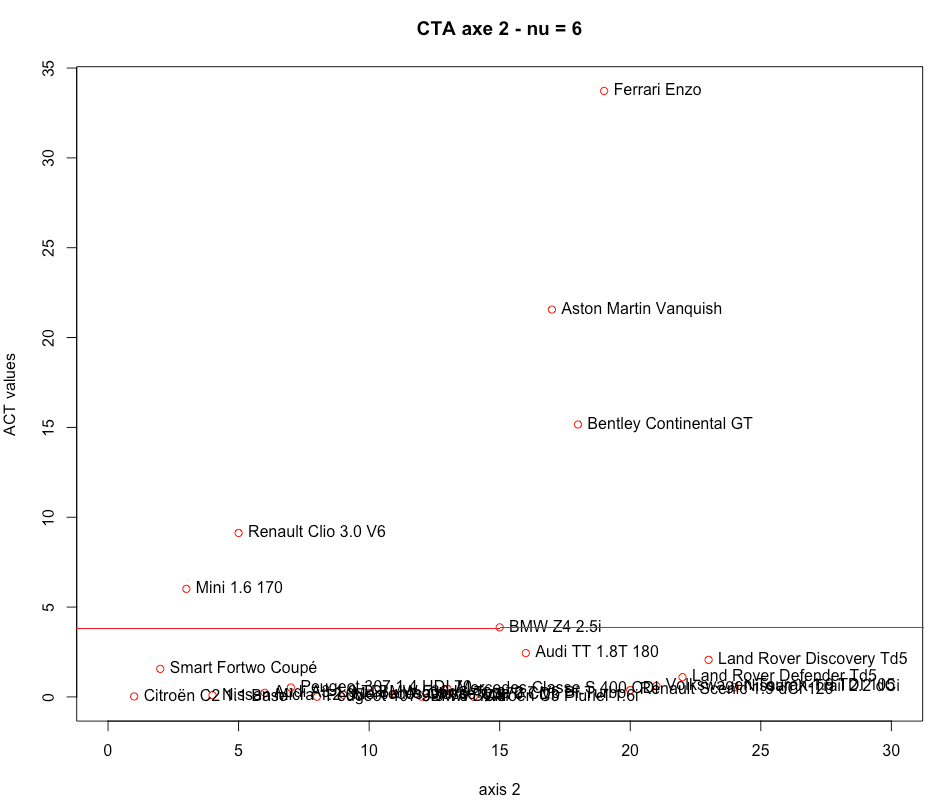} 
\caption{Axis 2 ($\nu=6$) }
\end{subfigure}%
\caption{Gini ACTs ($\nu=6$)}\label{fig:10}
\end{figure}

Finally, some cars are not correlated with axis 2 in the standard PCA, see Figures \ref{fig:8}--\ref{fig:10}, while this is the case in the Gini PCA. Indeed some cars are now associated with speed: Aston Martin, Bentley Continental GT, Renault Clio 3.0 V6 and Mini 1.6 170. This example of application shows that the use of the Gini metric robust to outliers may involve some serious changes in the interpretation of the results. 

\section{Conclusion}

In this paper, it has been shown that the geometry of the Gini covariance operator allows one to perform Gini PCA, that is, a robust principal component analysis based on the $\ell_1$ norm.

\medskip

To be precise, the variance may be replaced by the Gini Mean Difference, which captures the variability of couples of variables based on the rank of the observations in order to attenuate the influence of the outliers. The Gini Mean Difference may be rather interpreted with the aid of the generalized Gini index $GGMD_\nu$ in the new subspace for a better understanding of the variability of the components, that is, $GGMD_\nu$ is both a rank-dependent measure of variability in \cite{Yaari1987} sense and also an eigenvalue of the Gini correlation matrix. 

\medskip

Contrary to many approaches in multidimensional statistics in which the standard variance-covariance matrix is used to project the data onto a new subspace before deriving multidimensional Gini indices (see e.g. \cite{Banerjee}), we propose to employ the Gini correlation indices (see \cite{Yitzhaki13}). This provides the ability to interpret the results with the $\ell_1$ norm and the use of $U$-statistics to measure the significance of the correlation between the new axes and the variables.  

\medskip

This research may open the way on data analysis based on Gini metrics in order to study multivariate correlations with categorical variables or discriminant analyses when outlying observations drastically affect the sample.

\break
\section*{Appendix}

\begin{table}[h]
\begin{footnotesize}
\begin{tabular}{|c ||c c c c c c|} 
\hline
\hline
cars& capacity $x_1$ &power $x_2$ &speed $x_3$ &weight $x_4$ &width $x_5$  & length $x_6$\\ \hline 
Citro\"{e}n C2 1.1 Base	&1124	&61	&158	&932	&1659	&3666 \\
Smart Fortwo Coup\'{e}&	698	&52	&135	&730	&1515	&2500 \\
Mini 1.6 170	&1598	&170	&218	&1215	&1690	&3625 \\
Nissan Micra 1.2 65	&1240	&65	&154	&965	&1660	&3715 \\
Renault Clio 3.0 V6	&2946	&255	&245	&1400	&1810	&3812 \\
Audi A3 1.9 TDI	&1896	&105	&187	&1295	&1765	&4203\\
Peugeot 307 1.4 HDI 70	&1398	&70	&160	&1179	&1746	&4202\\
Peugeot 407 3.0 V6 BVA	&2946	&211	&229	&1640	&1811	&4676\\
Mercedes Classe C 270 CDI	&2685	&170	&230	&1600	&1728	&4528\\
BMW 530d&	2993	&218	&245	&1595	&1846	&4841\\
Jaguar S-Type 2.7 V6 Bi-Turbo&	2720	&207	&230	&1722	&1818	&4905\\
BMW  745i	&4398	&333	&250	&1870	&1902	&5029\\
Mercedes Classe S 400 CDI	&3966	&260	&250	&1915	&2092	&5038\\
Citro\"{e}n C3 Pluriel 1.6i	&1587	&110	&185	&1177	&1700	&3934\\
BMW Z4 2.5i	&2494	&192	&235	&1260	&1781	&4091\\
Audi TT 1.8T 180	&1781	&180	&228	&1280	&1764	&4041\\
Aston Martin Vanquish	&5935	&460	&306	&1835	&1923	&4665\\
Bentley Continental GT	&5998	&560	&318	&2385	&1918	&4804\\
Ferrari Enzo	&5998	&660	&350	&1365	&2650	&4700\\
Renault Scenic 1.9 dCi 120	&1870	&120	&188	&1430	&1805	&4259\\
Volkswagen Touran 1.9 TDI 105	&1896	&105	&180	&1498	&1794	&4391\\
Land Rover Defender Td5	&2495	&122	&135	&1695	&1790	&3883\\
Land Rover Discovery Td5	&2495	&138	&157	&2175	&2190	&4705\\
Nissan X-Trail 2.2 dCi	&2184	&136	&180	&1520	&1765	&4455\\
\hline
\end{tabular}
\end{footnotesize}
\caption{Cars data}
\end{table}

\end{document}